\declaretheorem[name=Theorem]{thm}
\declaretheorem[name=Lemma]{lemma}
\begin{document}

	\title{Fast quantum integer multiplication with zero ancillas}
	\author{Gregory D. Kahanamoku-Meyer}
	\email{Corresponding author. Email: gkm@berkeley.edu}
	\affiliation{Department of Physics, University of California at Berkeley, Berkeley, CA 94720}
	\affiliation{Research Laboratory of Electronics, Massachusetts Institute of Technology, Cambridge, MA 02139}
	\author{Norman Y. Yao}
	\affiliation{Department of Physics, University of California at Berkeley, Berkeley, CA 94720}
	\affiliation{Department of Physics, Harvard University, Cambridge, MA 02139}

	\begin{abstract}
          The multiplication of superpositions of numbers is a core operation in many quantum algorithms.
          The standard method for multiplication (both classical and quantum) has a runtime quadratic in the size of the inputs.
          Quantum circuits with asymptotically fewer gates have been developed, but generally exhibit large overheads, especially in the number of ancilla qubits.
          In this work, we introduce a new paradigm for sub-quadratic-time quantum multiplication with zero ancilla qubits---the only qubits involved are the input and output registers themselves.
          Our algorithm achieves an asymptotic gate count of $\mathcal{O}(n^{1+\epsilon})$ for any $\epsilon > 0$; with practical choices of parameters, we expect scalings as low as $\mathcal{O}(n^{1.3})$.
          Used as a subroutine in Shor's algorithm, our technique immediately yields a factoring circuit with $\mathcal{O}(n^{2+\epsilon})$ gates and only $2n + \mathcal{O}(\log n)$ qubits; to our knowledge, this is by far the best qubit count of any factoring circuit with a sub-cubic number of gates.
          Used in Regev's recent factoring algorithm, the gate count is $\mathcal{O}(n^{1.5+\epsilon})$.
          Finally, we demonstrate that our algorithm has the potential to outperform previous proposals at problem sizes relevant in practice, including yielding the smallest circuits we know of for classically-verifiable quantum advantage.
	\end{abstract}

	\maketitle

    \section{Introduction}
    
    Quantum circuits that perform arithmetic on superpositions of numbers are the fundamental building block of many quantum algorithms, from factoring to protocols for certifiable random number generation and efficiently-verifiable quantum computational advantage~\cite{shor_polynomial-time_1997, mahadev_classical_2018, gheorghiu_computationally-secure_2019, brakerski_cryptographic_2021, brakerski_simpler_2020, kahanamoku-meyer_classically_2022, brakerski_simple_2023, regev_efficient_2023}. 
    The feasibility of realizing these applications depends on the efficiency with which this arithmetic, and in particular multiplication, can be implemented.
    The standard way of performing multiplication, in both the classical and quantum setting, is via the ``schoolbook'' algorithm using $\mathcal{O}(n^2)$ gates, where $n$ is the size of the input.
    Asymptotically sub-quadratic-time algorithms have been known for over half a century, but have overheads that make them useful only for the multiplication of very large values.\footnote{For a classical example, the GNU multiple-precision arithmetic library uses a threshold of 2176 bit inputs to switch away from the schoolbook method.}
    In the quantum setting,  the reversibility constraint imposed by  unitarity generally makes these overheads even worse. 
         
    Despite this challenge, several works have explored  the implementation of quantum circuits for sub-quadratic-time multiplication, largely focusing on the Karatsuba algorithm which exhibits an asymptotic run-time of roughly $\mathcal{O}(n^{1.58})$~\cite{parent_improved_2018, dutta_quantum_2018,
        gidney_asymptotically_2019, larasati_quantum_2021, kowada_reversible_2006}.
	A significant obstacle to this effort is the recursive structure of fast multiplication algorithms---making them reversible involves storing intermediate data, which ultimately requires utilizing a large number of ancilla qubits. 
    Notable recent work has, for the first time, reduced the number of ancillas required for quantum Karatsuba multiplication to linear  in the size of the inputs~\cite{gidney_asymptotically_2019}.
    However, in practice, the number of ancillas still dominates the total qubit cost. 
    Moreover, it has remained an open question whether sub-quadratic-time quantum multiplication with fewer than $\mathcal{O}(n)$ ancillas is even possible.  
    
    \begin{figure*}
    	\includegraphics[width=\textwidth]{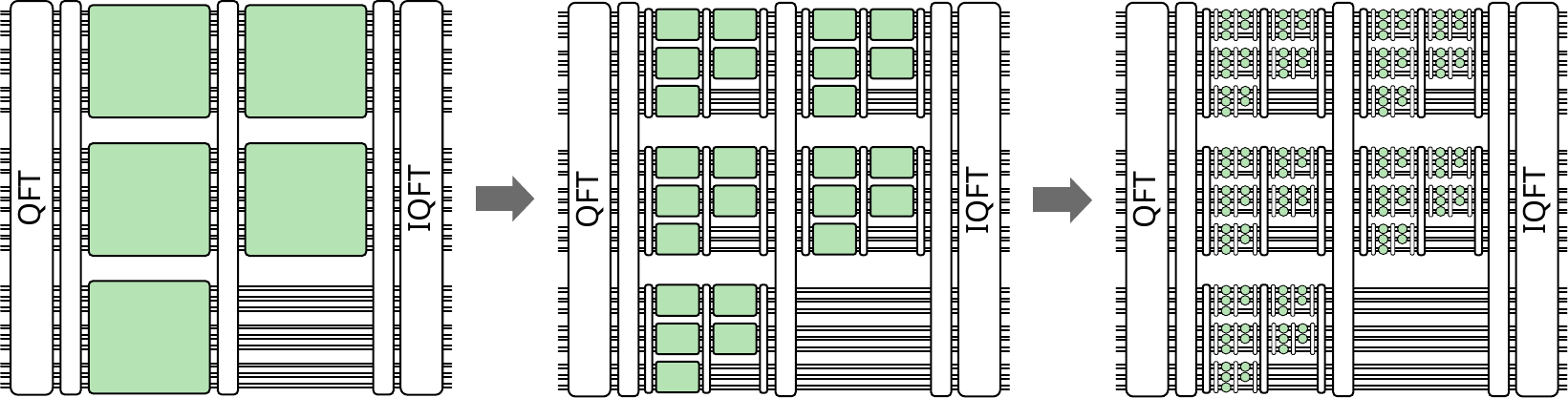}
    	\caption{\textbf{Recursive structure of the quantum multiplication circuit, for $k=3$ classical-quantum multiplication.} The green boxes represent phase rotations proportional to the product of registers (the \textsf{PhaseProduct} operation described in Section~\ref{sec:algorithm}), which are recursively decomposed; the unlabeled tall, narrow unitaries represented by white boxes are in-place quantum addition circuits. The qubits of the input and output registers are interleaved; the top pair of qubits corresponds to the least significant bit of the input and output registers, the next pair the second least significant bits of each, and so on. At the last level of recursion, phase rotations on individual pairs of qubits can be implemented directly as two-qubit controlled phase rotation gates (represented here by green circles). Note that the locality of the gates improves as the recursion proceeds.}
    	\label{fig:circuit}
    \end{figure*}

    In this work, we introduce a new paradigm for sub-quadratic-time multiplication on a quantum computer, that requires zero ancilla qubits (Fig.~\ref{fig:circuit}): the only qubits needed are those used to store the input and output data.
  	In terms of gate counts, our algorithms achieve asymptotic scaling as low as $\mathcal{O}(n^{1+\epsilon})$ for any $\epsilon > 0$ (see Table~\ref{tab:scaling}).
    Our approach combines ideas from classical fast multiplication  with an inherently quantum technique where arithmetic is performed in the phase of a quantum state. 
	Using our algorithm as a subroutine in Shor's algorithm for factoring yields circuits that require only $2n + \mathcal{O}(\log n)$ total qubits, where $n$ is the length of the integer to be factored, while obtaining an asymptotic gate count of $\mathcal{O}(n^{2+\epsilon})$ for arbitrarily small $\epsilon$; to the best of our knowledge, this represents by far the fewest qubits required to perform Shor's algorithm in fewer than $\mathcal{O}(n^{3})$ gates.\footnote{See Table~1 of~\cite{gidney_how_2021} for a recent comparison of proposals for the implementation of Shor's algorithm.}
	Moreover, used as a subroutine in a space-optimized version of Regev's recent factoring algorithm, our multiplication algorithm yields quantum circuits for factoring with as few as $\mathcal{O}(n^{1.5 + \epsilon})$ gates and $\mathcal{O}(n)$ qubits~\cite{regev_efficient_2023, ragavan_space-efficient_2024}.

    One might naturally wonder whether our algorithm's elimination of ancilla qubits comes at the cost of significant overhead in constant factors.
    This is not the case.
    Indeed, for multiplication of a quantum integer register by a classical value (the key operation for implementing Shor's algorithm), we find it is possible to construct circuits for which the number of qubits required is dramatically reduced, while the gate count remains small~(see Table~\ref{tab:cq-2048-costs}).
    Intuitively, this gate efficiency comes from a fundamental feature of our algorithm's structure: that a substantial portion of the overhead implicit in fast multiplication can be performed in classical precomputation. 
    Highlighting the versatility of our algorithm, we also investigate its application to an efficiently-verifiable proof of quantumness~\cite{kahanamoku-meyer_classically_2022}~(see Table~\ref{tab:x2modN-costs}).
    
    \section{Background and Framework}
    
    We focus on the implementation of two related unitaries, which are  defined by their action on product states (and can be extended by linearity to superpositions of inputs):
    \begin{align}
    	\mathcal{U}_{c \times q}(a) \ket{x} \ket{w} &= \ket{x} \ket{w + ax} \label{eq:cq_unitary}\\
    	\mathcal{U}_{q \times q} \ket{x} \ket{y} \ket{w} &= \ket{x} \ket{y} \ket{w + xy}. \label{eq:qq_unitary}
    \end{align}
 The \emph{classical-quantum multiply}, $\mathcal{U}_{c \times q} (a)$, corresponds to the multiplication of an integer stored in a quantum register,  $\ket{x}$, by a classical number, $a$.
 The result is added to an output register that initially contains the value $\ket{w}$.
Similarly, the \emph{quantum-quantum multiply}, $\mathcal{U}_{q \times q}$, corresponds to the multiplication of a quantum integer  $\ket{x}$ by another quantum integer $\ket{y}$.
The value is again added to an output register $\ket{w}$.

In order to perform both of the above unitaries in sub-quadratic-time without ancillas, we draw inspiration from two key ideas: (i) the classical fast (sub-quadratic) multiplication algorithm of Toom and Cook and (ii) the quantum Fourier transform based multiplication algorithm of Draper.
At first glance, these two seem entirely incompatible.
Toom-Cook gains its speed from the reuse of intermediate values that have already been computed; however, Draper arithmetic is performed via quantum phase rotations, which, once applied, cannot be reused. 
The crux of our result uses an inexpensive classical precomputation to obviate the reuse of intermediate values; this simultaneously removes the need to store intermediate values, while also reducing the size of the quantum circuit.

	\subsection{Classical Toom-Cook multiplication}
	\label{subsec:toom-cook}

	The most straightforward algorithm for multiplication is known as the \textit{schoolbook} algorithm and has a runtime of $\mathcal{O}(n^2)$.
 It decomposes the product as $xy = \sum_{i,j} b^{i+j} x_i y_j$ for a base $b$, where $x_i$ is the $i^\mathrm{th}$ base-$b$ digit of $x$ (and $y_j$ is defined similarly).
	With $b=2$, the circuit is quite simple, because the binary product $x_i y_j$ is an AND gate.
	
	Starting in the 1960s, it was realized that  multiplication could be performed faster than $\mathcal{O}(n^2)$ by breaking the inputs into pieces, and carefully combining linear combinations of those pieces.
To establish the notation that we will use throughout this work, we begin by expressing the classical Toom-Cook multiplication algorithm in the language of linear algebra~\cite{bodrato_towards_2007, knuth_art_1998}.
	
Consider an $n$-bit value $x$ written in base  $b = 2^{n/k}$ for some integer $k$, such that $
		x = \sum_{i=0}^{k-1} x_i b^i$.
In this representation, each of the $k$ digits, $x_i$, corresponds to $n/k$ bits of the binary representation of $x$.
The above sum can be expressed as the inner product of two length-$k$ vectors: $\mathbf{x} = (x_{k-1}, \cdots, x_1, x_0)$ and $\mathbf{e}_b = (b^{k-1}, \cdots, b, 1)$,
\begin{equation}
		x = \sum_{i=0}^{k-1} x_i b^i = \mathbf{x}^{\intercal}  \mathbf{e}_b.
		\label{eq:toom-cook-poly}
	\end{equation}

	Allowing $b$ to vary, Eq.~\ref{eq:toom-cook-poly} defines a polynomial $x(b)$ whose coefficients are the elements of $\mathbf{x}$.
	Letting $y(b)$ be the polynomial corresponding to another integer $y$, one can define a third polynomial $p(b) = x(b) y(b)$ for which $p(2^{n/k})$ equals the numerical value of the integer product $p = xy$.
	Thus, the problem of integer multiplication can be expressed as polynomial multiplication.  
	
	The key insight of the Toom-Cook algorithm is that the polynomial $p(b)$ is uniquely determined by its value at $q=2k-1$ points $w_\ell$, where $\ell \in \{0, \cdots,q-1\}$, and that each value, $p(w_\ell)$, can be computed as the pointwise multiplication $x(w_\ell) y(w_\ell)$.
	With an appropriate choice of $w_\ell$, the values $x(w_\ell)$ and $y(w_\ell)$ are integers of roughly $n/k$ bits. Thus $p(b)$ can be computed via only $q=2k-1$ multiplications (compare to the $k^2$ required by the schoolbook algorithm) of size $\sim n/k$.
	By recursively applying this construction to the smaller products of size $n/k$, and then the resulting products  of size $n/k^2$ (and so on),  the Toom-Cook algorithm achieves an asymptotic runtime of $\mathcal{O}(n^{\log_k q})$, which is sub-quadratic for all $k \geq 2$.

 The evaluation of the polynomial $x(b)$ at the points $w_\ell$ can be expressed via the matrix $A$ whose $\ell^\mathrm{th}$ row is $\mathbf{e}_{w_\ell}$ (which is known as a Vandermonde matrix):
	\begin{equation}
		A = \begin{pmatrix}
			w_0^{q-1} & \cdots & w_0^2 & w_0 & 1 \\
			w_1^{q-1} & \cdots & w_1^2 & w_1 & 1 \\
			\vdots    & \ddots & \vdots & \vdots & \vdots \\
			w_{q-1}^{q-1} & \cdots & w_{q-1}^2 & w_{q-1} & 1 \\
		\end{pmatrix}.
	\end{equation}
	Then, the vector $\tilde{\mathbf{x}}$ containing the values $x(w_\ell)$ is given by $\tilde{\mathbf{x}} = A\mathbf{x}$, and the vector $\tilde{\mathbf{p}}$, containing the values $p(w_\ell)$, can be expressed as
		$\tilde{\mathbf{p}} = \tilde{\mathbf{x}} \circ \tilde{\mathbf{y}} = A\mathbf{x} \circ A\mathbf{y}$,
 where  $\circ$ denotes the pointwise  product of vectors.
	(We implicitly extend the vectors $\mathbf{x}$ and $\mathbf{y}$ to length $q$ elements by inserting zeros for all indices larger than $k-1$).
 
	As long as the points $w_\ell$ are chosen such that $A$ is invertible, the vector of polynomial coefficients $\mathbf{p}$ can be computed as $A^{-1} \tilde{\mathbf{p}}$.
	This polynomial can then be evaluated at $b=2^{n/k}$
 by taking the inner product of $\mathbf{p}$ and $\mathbf{e}_{2^{n/k}}$, yielding  the integer product $p$.
	Putting it all together, the integer value $p = xy$ can be computed as
	\begin{equation}
		xy = \mathbf{e}_{2^{n/k}}^{\intercal} A^{-1} (A\mathbf{x} \circ A\mathbf{y}).
		\label{eq:toom-cook-full}
	\end{equation}
	For clarity of intuition, we provide an explicit instantiation of this expression for $k=2$---corresponding to the Karatsuba algorithm---in Appendix~\ref{app:karatsuba-toom-cook}.
	
	\subsection{Quantum multiplication via the QFT}
	
	Many quantum circuits for multiplication mirror classical ones, with clever optimizations to account for unitarity~\cite{vedral_quantum_1996, zalka_fast_1998, cuccaro_new_2004, draper_logarithmic-depth_2006, haner_factoring_2017, gidney_factoring_2018, gidney_asymptotically_2019}.
	Here, we describe an early result by Draper, which stands out as an example of a quantum arithmetic circuit that has no classical analogue~\cite{draper_addition_2000}.
	Consider the quantum Fourier transform (QFT) of the output register in Eq.~\ref{eq:cq_unitary}, before and after the application of the classical-quantum multiply unitary $\mathcal{U}_{c \times q} (a)$,
	\begin{align*}
		\ket{x} \ket{w} &\xrightarrow{(\mathbb{I} \otimes \mathsf{QFT})} \sum_z \ket{x} \exp \left(2\pi i wz / 2^{2n} \right) \ket{z} \\
		 \ket{x} \ket{w + ax} &\xrightarrow{(\mathbb{I} \otimes \mathsf{QFT})} \sum_z \ket{x} \exp \left(2\pi i (w+ax) z / 2^{2n} \right) \ket{z}.
	\end{align*}
	The two only differ by a phase, $2\pi axz / 2^{2n}$, on each element of the superposition over $\ket{z}$.
	This suggests the following strategy for implementing $\mathcal{U}_{c \times q}(a)$: (i) apply a QFT to the output register, (ii) apply the diagonal unitary 
	\begin{equation}
		\tilde{\mathcal{U}}_{c \times q} (a) \ket{x} \ket{z} = \exp(2\pi i axz / 2^{2n}) \ket{x} \ket{z}
		\label{eq:cq_phase}
	\end{equation}
	and then (iii) apply an inverse QFT.
In effect, this decomposes the original unitary as: 
 $\mathcal{U}_{c \times q} (a) = (\mathbb{I} \otimes \mathsf{IQFT}) \tilde{\mathcal{U}}_{c \times q} (a) (\mathbb{I} \otimes \mathsf{QFT})$.
	For the quantum-quantum multiplication unitary $\mathcal{U}_{q \times q}$ [Eq.~\ref{eq:qq_unitary}], the strategy is identical, with the diagonal unitary being,
	\begin{equation}
		\tilde{\mathcal{U}}_{q \times q} \ket{x} \ket{y} \ket{z} = \exp(2\pi i xyz / 2^{2n}) \ket{x} \ket{y} \ket{z}
		\label{eq:qq_phase}
	\end{equation}

To date,  the phase rotation unitaries, $\tilde{\mathcal{U}}_{c \times q}(a)$ and $\tilde{\mathcal{U}}_{q \times q}$, have been constructed via the binary schoolbook decomposition of the products $xz$ and $xyz$:
	\begin{align}
		\exp(2\pi i axz / 2^{2n}) &= \prod_{ik} \exp(2\pi i a 2^{i+k} x_i z_k/ 2^{2n}) \label{eq:fourier-schoolbook} \\
		\exp(2\pi i xyz / 2^{2n}) &= \prod_{ijk} \exp(2\pi i 2^{i+j+k} x_i y_j z_k / 2^{2n}).\label{eq:fourier-qq-schoolbook}
	\end{align}
	Since the products $x_i z_k$ and $x_i y_j z_k$ are over binary values, they are 0 if any of the bits are zero (corresponding to no phase shift) and 1 if all the bits are 1.
	Thus,  the phases can be implemented as a series of either singly-controlled  or doubly-controlled  phase rotations, for $\tilde{\mathcal{U}}_{c \times q}(a)$ and  $\tilde{\mathcal{U}}_{q \times q}$, respectively.
For the classical-quantum case, the number of controlled phase rotations is $\mathcal{O}(n^2)$, while for the quantum-quantum case, one has to perform $\mathcal{O}(n^3)$ phase rotations; the difference comes from the fact that one can scale the entire phase rotation  by $a$ since it is a classical value.\footnote{In both cases, the quantum Fourier transforms (and their inverses) can be performed in $\mathcal{O}(n^2)$ time with no ancilla qubits. Thus, they do not affect the asymptotic complexity.}
In both cases, the number of ancilla qubits required is zero. 

	\section{Sub-quadratic quantum multiplication without ancillas}
	\label{sec:algorithm}	
	
	Our main result is that it is possible to implement the phases of Eqs.~\ref{eq:cq_phase} and~\ref{eq:qq_phase} via a construction based on the Toom-Cook algorithm, without ancilla qubits.
	As a corollary, we also find an algorithm for the exact quantum Fourier transform using a sub-quadratic number of gates and no ancillas.
	Taken together, these results yield ancilla-free algorithms for fast  quantum integer multiplication, whose explicit asymptotic scalings for are shown in Table~\ref{tab:scaling}.
	
	\begin{table}
		\begin{subtable}{0.5\textwidth}
			\centering
			\begin{tabular}{|c|c|}
				\hline 
				Algorithm & Gate count $\mathcal{O}(n^{\log_k (2k-1)})$ \\ 
				\hline 
				Eq.~\ref{eq:fourier-schoolbook} & $\mathcal{O}(n^2)$ \\ 
				\hline 
				$k=2$ & $\mathcal{O}(n^{1.58\cdots})$ \\ 
				\hline
				$k=5$ & $\mathcal{O}(n^{1.37\cdots})$ \\ 
				\hline
				$k=8$ & $\mathcal{O}(n^{1.30\cdots})$ \\ 
				\hline
			\end{tabular}
			\caption{Classical-quantum multiplication}
		\end{subtable}
		\begin{subtable}{0.5\textwidth}
			\centering
			\begin{tabular}{|c|c|}
				\hline 
				Algorithm & Gate count $\mathcal{O}(n^{\log_k (3k-1)})$ \\ 
				\hline 
				Eq.~\ref{eq:fourier-qq-schoolbook} & $\mathcal{O}(n^3)$ \\ 
				\hline 
				$k=2$ & $\mathcal{O}(n^2)$ \\ 
				\hline 
				$k=3$ & $\mathcal{O}(n^{1.77\cdots})$ \\ 
				\hline 
				$k=6$ & $\mathcal{O}(n^{1.55\cdots})$ \\ 
				\hline 
				$k=9$ & $\mathcal{O}(n^{1.46\cdots})$ \\ 
				\hline 
			\end{tabular} 
			\caption{Quantum-quantum multiplication}
		\end{subtable}
		\caption{
			\textbf{Asymptotic scaling of gate counts for various selected $k$.}
			Note that the constant factors (which are not visible in big-$\mathcal{O}$ notation) become worse as $k$ increases, resulting in a tradeoff.
			The values of $k$ shown here are practical for input integers up to a few thousand bits (see Sec.~\ref{sec:applications}).
		}
		\label{tab:scaling}
	\end{table}
	
	\textbf{Phase rotation algorithms---}Consider a generalization of the phase rotation of Eq.~\ref{eq:cq_phase}, which we denote the \textsf{PhaseProduct} and define as follows:
	\begin{equation}
		\mathsf{PhaseProduct}\left(\phi\right)   \ket{x} \ket{z} =  \exp(i \phi x z)  \ket{x} \ket{z}.
		\label{eq:phase-prod}
	\end{equation}
	Note that $\mathsf{PhaseProduct}(2\pi a/2^{2n}) = \tilde{\mathcal{U}}_{c \times q}(a)$.
	Our goal is to decompose the phase $\phi x z$ into a sum of many phases, that are each  easier to implement.
	At first glance, the Toom-Cook construction does not seem helpful:   it involves both recursion and a complicated linear algebra expression (Eq.~\ref{eq:toom-cook-full}) that does not obviously decompose into a single sum.
	Our key  observation is that $\mathbf{e}_{2^{n/k}}^\intercal$ and $A^{-1}$ are classically-known values, and thus a vector $\vec{\phi} = \phi \mathbf{e}_{2^{n/k}}^\intercal A^{-1}$ can be pre-computed.
	Using this in Eq.~\ref{eq:toom-cook-full} for the product $xz$ yields
	\begin{equation}
		\phi x z = \sum_{\ell=0}^{q-1}\phi_\ell (A \mathbf{x})_\ell (A \mathbf{z})_\ell ,
		\label{eq:phase-decomp}
	\end{equation}
	where $\phi_\ell$ are the elements of $\vec{\phi}$.
	To implement this decomposition, the only linear algebra that needs to be done by the quantum circuit is the computation of $(A \mathbf{x})_\ell$ and $(A \mathbf{z})_\ell$.
 As in the classical Toom-Cook algorithm, by  choosing  $w_\ell$ appropriately, these values can be computed using only a small number of additions.
	Furthermore, each term of the sum in Eq.~\ref{eq:phase-decomp} has the form of a \textsf{PhaseProduct} itself, but applied to values of size roughly $n/k$---thus, the decomposition can be applied recursively (Fig.~\ref{fig:circuit}).
	Doing so yields a decomposition of $\phi x z$ into $\mathcal{O}(n^{\log_k q})$ phase rotations that can each be implemented in a constant number of gates, plus an asymptotically negligible number of gates to compute (and uncompute) each $(A \mathbf{x})_\ell$ and $(A \mathbf{z})_\ell$.
	Thus, the total number of gates is $\mathcal{O}(n^{\log_k q})$ matching the asymptotic complexity of the classical Toom-Cook algorithm.

	Naively, ancilla registers would be needed to store the values $(A \mathbf{x})_\ell$ and $(A \mathbf{z})_\ell$.
	We use the fact that addition is reversible to instead overwrite parts of the input registers $\ket{x}$ and $\ket{z}$ with those values---for example, to compute $\ket{x_0 + x_1}$ we may simply sum $x_1$ into the part of the $\ket{x}$ register already containing $x_0$.\footnote{The idea of using the reversibility of addition to avoid allocating extra ancilla registers was proposed in~\cite{gidney_asymptotically_2019} in the context of the Karatsuba algorithm, however the overflow bits were not considered in that work because $\mathcal{O}(n)$ ancilla qubits were already otherwise being used.\label{foot:overwrite}}
	Because $(A \mathbf{x})_\ell$ and $(A \mathbf{z})_\ell$ may take on slightly larger values than the registers they overwrite, there will be some ``overflow" carry bits.
    We avoid having to store those in ancillas by devising a way to directly implement the part of the \textsf{PhaseProduct} that involves the overflow bits, without ever explicitly storing their value in ancillas (see Appendix~\ref{app:overflow-phase}).
    Thus, no ancilla qubits are needed for any part of the algorithm.
	
	Next, we discuss quantum-quantum multiplication, which requires only a small modification.
	We define an operation analogous to  \textsf{PhaseProduct}, but which operates on three registers instead of two:
	\begin{align}		\mathsf{PhaseTripleProduct}\left(\phi\right)  &\ket{x} \ket{y} \ket{z} \nonumber \\
		=  \exp(i \phi x y z)  &\ket{x} \ket{y} \ket{z}.
		\label{eq:triple-phase-prod}
	\end{align}
	In a less constrained setting, no special algorithm is required for the product of three integers---one would simply take the product of two and then multiply the third  by the result.
	However, we cannot make use of intermediate values like this in the phase;  instead, the entire product must be decomposed into a single sum.
	To this end, we introduce a modified version of the Toom-Cook construction: we define three polynomials $x(b)$, $y(b)$, $z(b)$, and compute their product $p(b) = x(b) y(b) z(b)$  by evaluating the polynomials at a set of points $\{w_\ell\}$, finding their pointwise product, and then interpolating. 
	The only difference from the conventional Toom-Cook construction  
is that $p(b)$ will now have degree $3(k-1)$ and is uniquely determined by $q=3k-2$ points.
	Aside from the slightly larger value of $q$, the structure of the quantum algorithm is the  same.
	We may decompose $\phi xyz$ as
	\begin{equation}
		\phi x y z = \sum_{\ell=0}^{q-1} \phi_\ell (A \mathbf{x})_\ell (A \mathbf{y})_\ell (A \mathbf{z})_\ell
	\end{equation}
	and recursively apply $\mathsf{PhaseTripleProduct}$ to each of the terms on the right hand side.
	The larger value of $q$ leads to  a slightly worse asymptotic scaling, but still outperforms $\mathcal{O}(n^2)$ for $k\geq 3$ [Table~\ref{tab:scaling}].
	Again, overwriting parts of the input register to compute the linear combinations, and performing the phase rotations corresponding to the overflow bits directly, our algorithm requires no ancilla qubits.

	Before moving on, we note two generalizations to $\mathcal{U}_{c\times q} (a)$ and $\mathcal{U}_{q\times q}$.
	Firstly, we observe that a classical-quantum-quantum product $\mathcal{U}_{c \times q\times q} (a)$ may be implemented via our quantum-quantum multiplication algorithm with no extra cost, by simply scaling $\phi$ by $a$.
	Secondly, we note that the classical value $a$ does not need to be an integer---it can be an arbitrary real number.
	Of course, for an exact answer, the product must be representable by the output qubits, but even if it is not, the final state will be a superposition of values near the correct value (see Section~5.2.1 of~\cite{nielsen_quantum_2011}).
	
	\textbf{Fast exact quantum Fourier transform---}Our algorithm for \textsf{PhaseProduct} can be directly used to implement an exact quantum Fourier transform with the same sub-quadratic gate count as classical-quantum multiplication, and again no ancilla qubits.
	It is known that the exact quantum Fourier transform with modulus $2^n$,  $\mathsf{QFT}_{2^n}$, can be implemented via the following three steps for any positive integer $m < n$~\cite{cleve_fast_2000}:
	\begin{enumerate}
		\item Apply $\mathsf{QFT}_{2^{m}}$ to the first $m$ qubits
		\item Apply $\mathsf{PhaseProduct}(2 \pi/2^{n})$ to $\ket{x} \ket{y}$, where $x$ is the value of the first $m$ qubits and $y$ is the value of the remaining $n-m$ qubits
		\item Apply $\mathsf{QFT}_{2^{n-m}}$ to the final $n-m$ qubits.
	\end{enumerate}
	Setting $m=1$ corresponds to the standard construction for the QFT, using $\mathcal{O}(n^2)$ gates.
	By setting $m=n/2$ and using recursion to perform steps 1 and 3, the asymptotic runtime can be improved.
	Prior to this work, the proposed implementation for step 2 was to directly compute the product $xy$ into an ancilla register, via a classical fast multiplication algorithm that has been compiled into a quantum circuit (at the cost of many ancillas).
 	Single-qubit phase rotations are then applied to the bits of the product, after which it must be uncomputed. 
	Instead, using our algorithm for \textsf{PhaseProduct} in step 2 yields an algorithm for the exact QFT whose asymptotic runtime matches that of our implementation of \textsf{PhaseProduct} without any ancillas.
	
	\section{Space-time trade-offs}
	\label{sec:spacetime}

In the previous sections, we have shown that sub-quadratic quantum multiplication can be performed with zero ancillas. 
However, we find that the use of just a few ancillas is helpful for improving the practical performance of our algorithm.  
As an example, explicitly storing the overflow bits eliminates the extra work required to implement that portion of the phase rotation (see Appendix~\ref{app:overflow-phase}), and requires at most a few dozen extra qubits for inputs of even several thousand bits.
In what follows, we discuss three other space-time trade-offs: (i) improving the base case of the recursion, (ii) implementing arbitrary phase rotations with low overhead, and (iii) achieving sub-linear circuit depth. 
	
	\textbf{Base case optimization---}One instance where a few extra ancillas can improve performance considerably is in the base case of \textsf{PhaseTripleProduct}.
 In particular, when the recursion has reached a sufficiently small size, denoted $n_\mathrm{base}$,  
the phase rotation is performed directly.
	For \textsf{PhaseTripleProduct},  implementing the base case via Eq.~\ref{eq:fourier-qq-schoolbook} requires $n_\mathrm{base}^3$ doubly-controlled phase gates.
	Instead, we propose a ``semi-digital'' implementation which requires only  $\sim n_\mathrm{base}^2$ gates and is structured as follows.
	To implement the base-case phase rotation of $\phi' x' y' z'$, start by explicitly computing $\ket{x'y'}$ into an ancilla register using a standard multiplication circuit. 
 Next, use $2n_\mathrm{base}^2$ controlled phase rotations (off of the ancilla register) to apply a phase of $\phi (x'y') z'$.
Finally, uncompute the ancilla register.
This procedure requires at most $2n_\mathrm{base}$ ancilla qubits.\footnote{We may even reduce the qubit count further by reusing a smaller number of qubits to compute and uncompute small chunks of $x'y'$ in sequence.}
 We note that for \textsf{PhaseProduct}, the above optimization is less important, since its base case can be implemented directly with Eq.~\ref{eq:fourier-schoolbook} using only $n_\mathrm{base}^2$ singly controlled phase rotations.
	
	\textbf{Implementing $CR_\phi$ gates with low overhead via a phase gradient---}In settings where only a discrete gate set is native (such as in the logical qubits of quantum error correcting codes), arbitrary phase rotations cannot be directly implemented~\cite{heeres_implementing_2017, acharya_suppressing_2023, bluvstein_quantum_2022}.
	However, the overall cost of synthesizing them can be  dramatically reduced by using a few ancilla qubits to store a so-called \textit{phase gradient} state: $\ket{\Phi} = \sum_{\omega = 0}^{2^m - 1} e^{-2 \pi i \omega / 2^m} \ket{\omega}$~\cite{kitaev_classical_2002, gidney_turning_2016, gidney_halving_2018, nam_approximate_2020}.\footnote{For a pedagogical exposition of the use of phase gradients to implement the quantum Fourier transform, see~\cite{gidney_turning_2016}.}
	Here, $m = \lceil \log_2 1/\eta \rceil$  where $\eta$ is the desired precision  and $\omega$ are integers.
	To apply an arbitrary phase $\phi$, one can simply use a quantum addition circuit to increment the $\omega$ register by $a = \lceil 2^m \phi/2\pi \rfloor$,  where $\lceil \cdot \rfloor$ denotes rounding to the nearest integer. This yields a phase shift of $\phi$, up to precision $\eta = 2^{-m}$:
	\begin{equation}
		\sum_{\omega = 0}^{2^m - 1} e^{-2 \pi i \omega / 2^m} \ket{\omega + a} = e^{2 \pi i a / 2^m} \ket{\Phi} \approx e^{i\phi} \ket{\Phi}.
	\end{equation}
	Thus, the $CR_\phi$ gates of our algorithm can be implemented via doubly-controlled addition circuits. 
	Because $\ket{\Phi}$ is an eigenstate of the addition circuit, it is not destroyed by this process and can be reused an arbitrary number of times.

	As discussed earlier, the base case of both \textsf{PhaseProduct} and \textsf{PhaseTripleProduct} can be implemented via $\sim n_\mathrm{base}^2$ $CR_\phi$ gates.
A naive use of the phase gradient state to implement these $\sim n_\mathrm{base}^2$ $CR_\phi$ gates yields a total cost of $\sim n_\mathrm{base}^2 \log( 1/\eta)$ Toffoli gates.
Instead, by again applying the semi-digital optimization introduced above, one can convert the $\sim n_\mathrm{base}^2$ $CR_\phi$ gates to $\sim n_\mathrm{base}^2$ Toffoli gates and $\sim n_\mathrm{base}$ $R_\phi$ gates.
Putting everything together, this technique implements the required rotations in \textsf{PhaseProduct} and \textsf{PhaseTripleProduct} using a total of only $ \sim (n_\mathrm{base}^2 + n_\mathrm{base} \log(1/\eta))$ Toffoli gates (plus some Clifford gates).
Since one expects $\log_2 (1/\eta) < n_\mathrm{base}$, this represents a considerable improvement in the overall cost. 
	
	\textbf{Circuit depth---}The recursive tree structure of our algorithms lends itself well to parallelization, as multiple branches of the tree can be performed simultaneously (Fig.~\ref{fig:circuit}).
	Since each input register is divided into $k$ parts, it is possible to perform $k$ recursive calls in parallel by storing a different linear combination in each part.
	By parallelizing in this way, \textsf{PhaseProduct} can be implemented in depth $\mathcal{O}(n^{\log_k 2})$ and \textsf{PhaseTripleProduct} in $\mathcal{O}(n^{\log_k 3})$.
	These scalings are sublinear for $k>2$ and $k>3$ respectively, and asymptotically the exponents can be made arbitrarily close to zero by increasing $k$.
	
	Achieving sublinear depth for \textsf{PhaseProduct} and \textsf{PhaseTripleProduct} does seem to require a small (sublinear) number of ancilla qubits, as sublinear depth addition (which is needed to compute the linear combinations $(A\bm{x})_\ell$) seems to require ancillas~\cite{takahashi_fast_2008}.\footnote{It is also not immediately clear how to parallelize the dirty qubit techniques of Appendix~\ref{app:overflow-phase} into sublinear depth.}
	However, surprisingly, the space bottleneck for sublinear-depth multiplication seems not to to be the phase rotation algorithms, but rather the quantum Fourier transforms on the output register.
	Indeed, to our knowledge the most space-efficient sublinear-depth (approximate) quantum Fourier transform requires $\mathcal{O}(n \log n)$ ancillas~\cite{cleve_fast_2000}.
	With this in mind, we formalize our result regarding the depth of quantum multiplication in the following claim, leaving the cost of the quantum Fourier transform abstract to allow for a choice of implementation:
	
	\begin{restatable}{claim}{depthclaim}
		\label{claim:depth}
		Suppose there exists a circuit for the quantum Fourier transform on $n$ qubits having depth $D_\mathsf{QFT}(n)$, ancilla count $A_\mathsf{QFT}(n)$ and gate count $G_\mathsf{QFT}(n)$.
		Then, there exist circuits for quantum integer multiplication of $n$-qubit values (Eqs.~\ref{eq:cq_unitary} and \ref{eq:qq_unitary}) with depth $\mathcal{O}(n^\epsilon) + 2 D_\mathsf{QFT}(n)$ and ancilla count $\max(\mathcal{O}(n/\log n), A_\mathsf{QFT}(n))$ using $\mathcal{O}(n^{1+\epsilon}) + 2G_\mathsf{QFT}(n)$ gates, for any $\epsilon > 0$.
	\end{restatable}
	
	\begin{proof}
		See Appendix~\ref{app:depth-proof}.
	\end{proof}
	
	A challenge in achieving the low depth and low qubit count of Claim~\ref{claim:depth} is that $k$ linear combinations $(A \mathbf{x})_\ell$ must be computed in-place, each overwriting part of the $\ket{x}$ register---but in general each $(A \mathbf{x})_\ell$ depends on \textit{all} the bits of $\ket{x}$, including those being overwritten by other values!
	While we show in the proof of Claim~\ref{claim:depth} that it is indeed possible to perform the desired computation in-place in polylogarithmic depth, in practice this step could lead to large constant factors if done without care.
	Thus, in Appendix~\ref{app:parallel-sequences}, we propose explicit sequences of parallel operations designed by hand for a few practically-relevant values of $k$, with the $w_\ell$ chosen carefully so that the linear combinations can be computed in-place via a small number of additions.
	Finding such efficient sequences for more values of $k$ may be important if the low-depth constructions are to be used in practice.
	
	Finally, we note that with the use of a superlinear number of qubits, we may parallelize all $q$ recursive calls into a single layer.
	Doing so yields an asymptotic depth of $\mathcal{O}(\log^2 n)$, and a total qubit count of $\mathcal{O}(n^{1+\epsilon})$ for any $\epsilon > 0$.
	It is already known that multiplication can be performed in depth $\mathcal{O}(\log^2 n)$ using $\mathcal{O}(n \log n \log \log n)$ qubits via a parallel quantum version of the Sch\"onhage-Strassen algorithm~\cite{nie_quantum_2023}; however, that algorithm has large constant factors, and it may be the case that for certain values of $n$ and $\epsilon$ our circuit uses fewer qubits and/or less depth in practice.
	The exploration of the practical costs of this extremely low-depth construction may be an interesting direction for future work.

 	\section{Applications}
	\label{sec:applications}
We now explore the use of our fast quantum multiplication algorithm  as a subroutine in two specific applications: Shor's algorithm for factoring, and a cryptographic proof of quantum computational advantage.
 	While we leave the explicit construction and optimization of quantum circuits to future work, here, we perform careful estimates of the resources required and find promising gate counts along with a dramatic reduction in the number of ancilla qubits (see Tables~\ref{tab:cq-2048-costs} and~\ref{tab:x2modN-costs}).\footnote{The code used to estimate these gates counts is available online: \url{https://zenodo.org/doi/10.5281/zenodo.10871109}}
The constructions we have presented so far implement multiplication over all integers.
	Many applications, including both that we discuss in this section, instead require multiplication over the integers modulo some $n$-bit integer $N$.

	\textbf{Modular multiplication---}It is straightforward to use generic multiplication to implement modular multiplication~\cite{montgomery_modular_1985}, but doing so increases the size of the circuit considerably and also requires the allocation of a $2n$-bit register to store the full product.\footnote{In some cases, Zalka's coset representation allows approximate quantum modular multiplication to be performed with essentially no overhead compared to standard multiplication~\cite{zalka_shors_2006}; fast multipliers like the one presented in this paper and the Karatsuba circuits of~\cite{gidney_windowed_2019, gidney_asymptotically_2019} seem to have a structure that is fundamentally incompatible with Zalka's trick.}
	Here, we propose to utilize an alternate strategy that leverages the cyclic nature of the quantum phase to perform the modulo operation automatically~\cite{kahanamoku-meyer_classically_2022}.
	That is, we replace Eqs.~\ref{eq:cq_phase} and~\ref{eq:qq_phase} with
	\begin{align}
		\tilde{\mathcal{U}}_{c \times q}' (a) \ket{x} \ket{z} &= \exp(2\pi i axz / N) \ket{x} \ket{z} \\
		\tilde{\mathcal{U}}_{q \times q}' \ket{x} \ket{y} \ket{z} &= \exp(2\pi i xyz / N) \ket{x} \ket{y} \ket{z}
	\end{align}
	such that multiples of $N$ in the product become multiples of $2\pi$ in the phase.
	The only challenge with this strategy is that performing (inverse) quantum Fourier transforms modulo arbitrary $N$ is too expensive.
	Instead, we propose to still perform the QFT modulo $2^n$, which will yield a state heavily weighted on the binary fraction that most closely approximates the rational value $ax/N$ (classical-quantum multiplication) or $xy/N$ (quantum-quantum multiplication); this binary fraction then uniquely identifies the integer value.
  From quantum phase estimation, it is known that an output register of $n+\mathcal{O}(\log(1/\eta))$ qubits is sufficient to ensure that the final state is within $\eta$ of $ax/N$ or $xy/N$, respectively~\cite{nielsen_quantum_2011}.
	
	\subsection{Classical-quantum multiplication: Shor's algorithm}

	Perhaps the most obvious application of our result is to Shor's algorithm for integer factorization, which can be implemented via $\mathcal{O}(n)$ controlled, in-place classical-quantum modular multiplications.
 In particular, it is possible to implement these multiplications to precision $\eta$ using a total of $2n + \mathcal{O}(\log(1/\eta))$ qubits (see Appendix~\ref{app:shor} for details).
 Thus, asymptotically, our technique yields a circuit for factoring  with $\mathcal{O}(n^{2+\epsilon})$ gates (for arbitrarily small $\epsilon$) and only $2n + \mathcal{O}(\log n)$ qubits.
 To our knowledge, this is by far the best qubit count of any factoring circuit with a sub-cubic number of gates.
 
 \begin{table*}
 	\begin{center}
 		\begin{tabular}{|c|c|c|c|c|c|}
 			\hline
 			\multirow{2}*{Algorithm} & Asymptotic & \multicolumn{3}{c|}{Gate count (millions)} & \multirow{2}*{Ancillas} \\ \cline{3-5} 
 			& scaling & Toffoli & $CR_\phi$ & $H$,$X$,CNOT & \\
 			\hline
 			\textbf{This work (standard QFT)} & $\bm{\mathcal{O}(n^{1.29})}^*$ & \textbf{0.6} & \textbf{0.3} & \textbf{1.9} & \textbf{79} \\
 			\textbf{This work (phase gradient QFT)} & $\bm{\mathcal{O}(n^{1.29})}^*$ & \textbf{0.9} & \textbf{0.1} & \textbf{3.2} & \textbf{80} \\
 			Karatsuba~\cite{gidney_windowed_2019} & $\mathcal{O}(n^{1.58})$ & 5.6 & --- & 34 & 12730 \\
 			Windowed~\cite{gidney_windowed_2019} & $\mathcal{O}(n^2 / \log^2 n)$ & 1.8 & --- & 2.5 & 4106 \\
 			Schoolbook~\cite{gidney_windowed_2019} & $\mathcal{O}(n^2)$ & 6.4 & --- & 38 & $1^{**}$ \\
 			\hline
 		\end{tabular}
 	\end{center}
 	
 	\caption{
 		\textbf{Circuit size estimates for one classical-quantum multiplication of 2048 bit numbers.}
 		Results from this work in bold, previous works in non-bold.
 		All estimates are in the ``abstract circuit model'' (no error correction or routing costs included).
 		Costs of previous works were computed using the Q\# code from~\cite{gidney_windowed_2019}.
 		Note that that code implements a regular multiplication rather than a multiplication $\bmod N$; using Zalka's coset representation of integers~\cite{zalka_shors_2006} the cost of modular multiplication is expected to be roughly the same as standard multiplication (except in the case of Karatsuba, for which it does not seem possible to apply Zalka's optimization and thus modular multiplication may be considerably more expensive).
 		Results for ``this work'' are reported for modular multiplication, and include the quantum Fourier transforms before and after the \textsf{PhaseProduct}, performed to a precision of $\eta = 10^{-12}$ per qubit.
 		Note also that the ``phase gradient QFT'' estimates here do not perform the arbitrary phase rotations in the base case of the \textsf{PhaseProduct} via phase gradient state, only the QFTs.
 		$^{*}$We vary the parameter $k$ throughout the recursion; this is the asymptotic scaling for $k=9$ which is the value at the top level of recursion.
 		$^{**}$An ancilla count of 2048 was reported in~\cite{gidney_windowed_2019} and by the associated code; via qubit reuse it should be possible to reduce this to 1.
 		We did not explore how qubit reuse could be applied to the other two constructions from~\cite{gidney_windowed_2019}.
 	}
 	\label{tab:cq-2048-costs}
 \end{table*}
	
	In order to get a sense of  circuit sizes in practice, we perform a detailed analysis of the gate and qubit counts for the out-of-place multiplication of 2048-bit inputs (which is  used to construct the in-place multiplications described above).
Specifically, we build the recursive tree (Fig.~\ref{fig:circuit}) and carefully tabulate the number and type of gates in each layer.
Via a careful choice of the points $w_\ell$, we are able to devise efficient addition sequences that minimize the overhead of forming linear combinations such as $(A\mathbf{x})_\ell$.
Moreover, at each layer, we  explicitly optimize over $k$, the number of pieces into which we divide the inputs (see e.g.~Sec.~\ref{subsec:toom-cook} and Table~\ref{tab:scaling}).
Interestingly, we find that it is sometimes optimal to use different values of $k$ even at the same level of recursion (see Appendix~\ref{app:estimation_details}).

Our results are depicted in
Table~\ref{tab:cq-2048-costs}.
For context, we compare to recent work which implements quantum circuits for classical-quantum multiplication, including via the sub-quadratic Karatsuba algorithm~\cite{gidney_windowed_2019}.\footnote{Note that the estimates from~\cite{gidney_windowed_2019} are for non-modular arithmetic.}
As discussed in the previous section, we have used a small number of extra ancillas to reduce the constant factors on the gate count.  
Even with this optimization, the ancilla counts are orders of magnitude less than comparable implementations!
	Moreover, the gate counts are very promising, although impossible to compare directly to alternate strategies without compiling to a common native gate set.
	We note that the majority of the $CR_\phi$ gates occur not in the \textsf{PhaseProduct} but in the QFTs performed on either side of it.
	As shown in Table~\ref{tab:cq-2048-costs}, implementing these QFTs using a phase gradient state   reduces the total number of $CR_\phi$ gates considerably.
	
	\subsection{$x^2 \bmod N$: cryptographic proofs of quantum computational advantage}
	
	Recently, much excitement has centered on the experimental realization of quantum computational advantage---specifically, the performance of random sampling tasks that are infeasible for even the world's fastest classical supercomputers~\cite{arute_quantum_2019, zhong_quantum_2020, wu_strong_2021, zhu_quantum_2022, morvan_phase_2023}.
	So far, such experiments have had the subtle characteristic that \textit{checking} the results classically is as hard as, if not harder than, generating the solution.
	However, recent theoretical work has explored the use of cryptography to design ``test of quantum computational advantage'' protocols which are classically efficiently verifiable, yet remain hard to spoof~\cite{brakerski_cryptographic_2021, brakerski_simpler_2020, kahanamoku-meyer_classically_2022}.
	When instantiated with quantum-secure cryptographic assumptions, this class of protocols is also potentially useful for other cryptographic tasks such as certifiable random number generation, remote state preparation, and classical delegation of quantum computations to untrusted devices~\cite{brakerski_cryptographic_2021, gheorghiu_computationally-secure_2019, mahadev_classical_2018, brakerski_simple_2023, natarajan_bounding_2023}.
	Because much of the underlying cryptography involves arithmetic, our algorithm for multiplication is directly relevant to the protocols' implementation.
	
	Here, we focus specifically on a recent proposal for a test of quantum computational advantage in which the quantum computer must evaluate the function $f(x) = x^2 \bmod N$ on a superposition of inputs $x$~\cite{kahanamoku-meyer_classically_2022}.
	This function, and the protocol it is used in, permit a number of helpful optimizations.
	For our multiplication algorithm, computing a square instead of a general quantum-quantum multiplication is less costly because the linear combinations like $(A\mathbf{x})_\ell$ need only be computed on two registers ($x$ and $z$) instead of three.
	The protocol also has a built-in measurement based uncomputation scheme, which allows ancilla qubits to be discarded by measuring them in the Hadamard basis and recording the measurement results, avoiding the need for explicit uncomputation of garbage bits~\cite{kahanamoku-meyer_classically_2022}.
	
	\begin{table*}
		\begin{center}
			\begin{tabular}{|c|c|c|c|c|c|c|}
				\hline
				\multirow{2}*{Algorithm} & Asymptotic & \multicolumn{4}{c|}{Gate count (millions)} & Total \\ \cline{3-6}
				& scaling & Toffoli & $CR_\phi$ & Measmt. & $H$,$X$,CNOT & qubits \\
				\hline
				\textbf{This work, ``fast''} & $\bm{\mathcal{O}(n^{1.49})}^*$ & \textbf{0.6} & \textbf{0.7} & \textbf{0.4} & \textbf{0.8} & \textbf{2937} \\
				\textbf{This work, ``balanced''} & $\bm{\mathcal{O}(n^{1.49})}^*$ & \textbf{0.7} & \textbf{0.7} & \textbf{0.4} & \textbf{1.1} & \textbf{2140} \\
				\textbf{This work, ``narrow''} & $\bm{\mathcal{O}(n^{1.55})}^*$ & \textbf{1.9} & \textbf{1.4} & \textbf{1.0} & \textbf{2.7} & \textbf{1583} \\ \hline
				Prev. Fourier 1~\cite{kahanamoku-meyer_classically_2022} & $\mathcal{O}(n^3)$ & --- & \textcolor{purple}{$539^{**}$} & --- & --- & 1025 \\
				Prev. Fourier 2~\cite{kahanamoku-meyer_classically_2022} & $\mathcal{O}(n^2 \log n)$ & --- & \textcolor{purple}{35} & --- & --- & 2062 \\ \hline
				``Digital'' Karatsuba~\cite{kahanamoku-meyer_classically_2022} & $\mathcal{O}(n^{1.58})$& 1.6 & --- & 1.1 & 1.6 & \textcolor{purple}{6801} \\
				``Digital'' Schoolbook~\cite{kahanamoku-meyer_classically_2022} & $\mathcal{O}(n^2)$ & 3.5 & --- & 2.2 & 2.9 & \textcolor{purple}{4097} \\
				\hline
			\end{tabular}
		\end{center}
		
		\caption{
			\textbf{Circuit size estimates for $x^2 \bmod N$ in the context of the proof of quantumness protocol, for 1024-bit $N$.}
			Results from this work in bold, previous results in non-bold.
			All estimates are in the ``abstract circuit model'' (no error correction or routing costs included).
			$^{*}$We vary the parameter $k$ throughout the recursion; these are the asymptotic scalings for the values of $k$ used at the top level of the recursions ($k=6$ for ``narrow'' and $k=8$ for ``balanced'' and ``fast'').
			$^{**}$These gates are doubly, rather than singly, controlled $R_\phi$ gates.
		}
		\label{tab:x2modN-costs}
	\end{table*}

 We perform analogous estimates
of the gate and qubit
counts for computing $f(x)$ with 1024-bit $x$ and $N$.\footnote{This is a problem size we expect to be infeasible for modern supercomputers: the protocol's hardness is based on the hardness of factoring $N$, and the largest publicly-known factorization of an integer without a special form is of length 829 bits~\cite{boudot_state_2022}.}
We explore three constructions (``fast'', ``balanced'', and ``narrow''), with varying trade-offs between qubit and gate count, and our results are shown in Table~\ref{tab:x2modN-costs}.
The ``fast'' version of our circuits aggressively uses the measurement-based uncomputation scheme to reduce gate counts at the expense of a moderate number of ancillas.
    The ``balanced'' version computes and uncomputes the sums in-place, requiring a few more gates but many fewer ancillas.
    Finally, the ``narrow'' version never stores the entire output and instead reuses a register of $n/2$ qubits twice, further reducing qubit counts at the expense of a few more gates.
 
	We observe that all three constructions reduce the total qubit counts substantially when compared to previous ``digital'' implementations of the Karatsuba and Schoolbook algorithms.
	When compared with two other algorithms optimized for qubit count (at the expense of gate count), our algorithms achieve similarly low qubit counts while reducing gate counts by orders of magnitude.
	
	\section{Outlook}
	
	The optimization of quantum circuits for performing arithmetic on superpositions of inputs has been the subject of study for decades.
	In this work, we have introduced a technique for performing quantum multiplication in a sub-quadratic number of gates  with zero ancillas.
	
	Our results open the door to a number of intriguing open questions. 
	First, can factoring be performed in under $\mathcal{O}(n^3)$ gates  with even fewer than the $2n + \mathcal{O}(\log n)$ qubits reported here?
	One obvious path towards achieving this would be to devise a new way to implement modular arithmetic in our algorithm without requiring $\mathcal{O}(\log n)$ extra qubits~\cite{zalka_shors_2006}.
 Another path is to apply our multiplication algorithm to a space-efficient version of Regev's recent fast factoring algorithm~\cite{regev_efficient_2023, ragavan_space-efficient_2024}; however, it seems that further space optimization would be required.
 Second, we note that there are likely more clever constructions for implementing the controlled arbitrary phase rotations of the base case, which avoid the need to compute and uncompute an intermediate ancilla register.
	Finally, it is worth exploring whether our strategy is compatible with even faster classical multiplication algorithms, such as the $\mathcal{O}(n \log n \log \log n)$ Schonhage-Strassen algorithm~\cite{schonhage_schnelle_1971, knuth_art_1998}.

	The authors would like acknowledge the insights of and discussions with Tanuj Khattar, Craig Gidney, Isaac Chuang, Seyoon Ragavan, Katherine van Kirk, and John Blue.
 This work was supported by the NSF QLCI Award OMA-2016245 and the NSF STAQ II program. 
 N.Y.Y. acknowledges support from a Simons Investigator Award.

	\bibliography{references}
	\bibliographystyle{gregbib}
	
	\clearpage
	\appendix
	
	\section{The Karatsuba algorithm in the notation of Section~\ref{subsec:toom-cook}}
	\label{app:karatsuba-toom-cook}
	
	Here for pedagogical purposes we briefly describe the Karatsuba algorithm, of which Toom-Cook is a generalization, and show how it can be expressed in the linear algebra notation introduced in Section~\ref{subsec:toom-cook}.
	
	We begin by introducing the Karatsuba algorithm.
	It is an instance of Toom-Cook with $k=2$, meaning that the input values are split into two pieces, corresponding to the low and high halves of the bits: $x = 2^{n/2} x_1 + x_0$.
	A product $xy$ of values expressed in this way can be expanded to 
	\begin{equation}
		xy = 2^n x_1 y_1 + 2^{n/2} (x_1 y_0 + x_0 y_1) + x_0 y_0
		\label{eq:karatsuba-orig}
	\end{equation}
	which replaces the product of $n$-bit values with four products of $n/2$-bit values.
	The key to the Karatsuba algorithm is to observe that the quantity in parentheses can be written as
	\begin{equation}
		x_1 y_0 + x_0 y_1 = (x_0 + x_1)(y_0 + y_1) - x_1 y_1 - x_0 y_0
	\end{equation}
	and since the second two products there can be reused from the other terms of Eq.~\ref{eq:karatsuba-orig}, the entire product can be computed using only three products of size $n/2$ instead of four (plus a few additions).
	By recursively applying this technique, one may compute the product $xy$ in only $\mathcal{O}(n^{\log_2 3}) = \mathcal{O}(n^{1.58\cdots})$ operations.
	
	In the notation of Section~\ref{subsec:toom-cook}, Karatsuba corresponds to setting $k=2$ and using the evaluation points $w_\ell \in \{0, 1, \infty\}$.
	For the sake of gaining intuition, it may be helpful for the reader to work through Eq.~\ref{eq:toom-cook-full} using these values---the result should be the Karatsuba decomposition of the product $xy$:
	\begin{equation}
		2^n x_1 y_1 + 2^{n/2} \left[(x_0 + x_1) (y_0 + y_1) - x_1 y_1 - x_0 y_0 \right] + x_0 y_0
	\end{equation}
	Note that arriving at the correct expression is easiest if one considers the point $\infty$ as the unit fraction $1/0$; see 
	Appendix~\ref{app:fractional_w}.
	
	\section{Use of fractional $w_\ell$}
	\label{app:fractional_w}
	
	In both classical Toom-Cook and in our algorithms, it is common to take some of the evaluation points $w_\ell$ to be unit fractions $1/c$ for some $c$ (frequently one such point is $\infty$, which we may consider as a unit fraction with $c=0$).
	In this section we describe how careful maneuvering allows us to maintain a Vandermode matrix $A$ of integer values, even when using such unit fractional $w_\ell$.
	
	Consider a length-$q$ vector $\mathbf{e}_{1/c}$ (which is row $\ell$ of the matrix $A$ for some $\ell$):
	\begin{equation}
		\mathbf{e}_{1/c} = (1/c^{q-1}, 1/c^{q-2}, \cdots, 1/c^2, 1/c, 1)
	\end{equation}
	Then the $\ell^{\mathrm{th}}$ element of $\tilde{\mathbf{x}} = A\mathbf{x}$ is $\mathbf{e}_{1/c} \cdot \mathbf{x}$, where the length-$k$ vector $\mathbf{x}$ has been padded to length $q$ by inserting zero elements on the left.
	
	The Toom-Cook algorithm is built on the fact that the $\ell^\mathrm{th}$ element of the product vector $\tilde{\mathbf{p}}$ is the pointwise product of the $\ell^\mathrm{th}$ elements of $\tilde{\mathbf{x}}$ and $\tilde{\mathbf{y}}$, that is,
	\begin{equation}
		\mathbf{e}_{1/c} \cdot \mathbf{p} = (\mathbf{e}_{1/c} \cdot \mathbf{x})(\mathbf{e}_{1/c} \cdot \mathbf{y})
	\end{equation}
	Now, the key is to scale both sides of the expression by $c^{2k-2}$ and then distribute that constant across the $\mathbf{e}$ vectors:
	\begin{equation}
		c^{2k-2} \mathbf{e}_{1/c} \cdot \mathbf{p} = (c^{k-1} \mathbf{e}_{1/c} \cdot \mathbf{x})(c^{k-1} \mathbf{e}_{1/c} \cdot \mathbf{y})
	\end{equation}
	Now we define $\mathbf{e}''_{1/c} = c^{2k-2} \mathbf{e}_{1/c}$ and (noting that $q=2k-1$) see that is has the form
	\begin{equation}
		\mathbf{e}''_{1/c} = (1, c, c^2, \cdots, c^{q-2}, c^{q-1})
	\end{equation}
	all of which are integers.
	Similarly (dropping the elements that are multiplied by padded zeros of $\mathbf{x}$) we have
	\begin{equation}
		\mathbf{e}'_{1/c} = (1, c, c^2, \cdots, c^{k-2}, c^{k-1})
	\end{equation}
	
	In summary, we may perform Toom-Cook using slightly adjusted matrices $A'$ and $A''$ which have only integer elements, as follows.
	Let the first $m$ evaluation points $w_\ell$ be integers, and the remaining $q-m$ points be unit fractions $1/c_\ell$.
	Then we construct the $q \times k$ matrix
	\begin{equation}
		A' = \begin{pmatrix}
			w_0^{k-1} & w_0^{k-2} & \cdots & w_0 & 1 \\
			\vdots & \vdots & \vdots & \vdots & \vdots \\
			w_{m-1}^{k-1} & w_{m-1}^{k-2} & \cdots & w_{m-1} & 1 \\
			1 & c_m & \cdots & c_m^{k-2} & c_m^{k-1} \\
			\vdots & \vdots & \vdots & \vdots & \vdots \\
			1 & c_{q-1} & \cdots & c_{q-1}^{k-2} & c_{q-1}^{k-1} \\
		\end{pmatrix}
	\end{equation}
	and the $q \times q$ matrix
	\begin{equation}
		A'' = \begin{pmatrix}
			w_0^{q-1} & w_0^{q-2} & \cdots & w_0 & 1 \\
			\vdots & \vdots & \vdots & \vdots & \vdots \\
			w_{m-1}^{q-1} & w_{m-1}^{q-2} & \cdots & w_{m-1} & 1 \\
			1 & c_m & \cdots & c_m^{q-2} & c_m^{q-1} \\
			\vdots & \vdots & \vdots & \vdots & \vdots \\
			1 & c_{q-1} & \cdots & c_{q-1}^{q-2} & c_{q-1}^{q-1} \\
		\end{pmatrix}.
	\end{equation}
	Finally, we use them in a slightly modified version of Eq.~\ref{eq:toom-cook-full}:
	\begin{equation}
		xy = \mathbf{e}_{2^{n/k}}^{\intercal} (A'')^{-1} (A'\mathbf{x} \circ A'\mathbf{y}).
	\end{equation}
	
	Now for a couple of remarks. 
	In the previous paragraphs we have discussed only the standard Toom-Cook decomposition, but this strategy applies equally well to our modified Toom-Cook decomposition of $xyz$.
	In that case, the appropriate rows of $A'$ get scaled by $c_\ell^{k-1}$, and because there are three factors, this matches a factor of $c^{q-1} = c^{3(k-1)}$ in the corresponding rows of the $A''$ matrix.
	
	Finally, we note that while $A''$ has integer entries, that does not imply that $(A'')^{-1}$ will.
	Indeed, it has been shown that no choice of $w_\ell$ can avoid the need for division if $k>2$~\cite{bodrato_towards_2007}.
	For our algorithm this is not a problem at all---the entries of $(A'')^{-1}$ get wrapped up into the $\phi_\ell$ in classical precomputation, and thus the fact that they are not always integers does not affect the quantum circuit.
	This \textit{is} actually a big challenge for fast multiplication circuits other than ours, however, because it is necessary to perform division during the interpolation step, which can be expensive.
	This issue was encountered by a recent work which explored a quantum implementation of $k=3$ Toom-Cook, and found that the division operation led to a large Toffoli count~\cite{larasati_quantum_2021}.
	
	\section{Overflow bit phase application}
	\label{app:overflow-phase}

	As described in the main text, we avoid allocating extra registers to store the linear combinations $(A\mathbf{x})_\ell = \mathbf{x}^\intercal \mathbf{e}_{w_\ell}$ by performing addition \textit{in-place}, reusing existing qubits to store the summed values.\footnote{See footnote \ref{foot:overwrite}}
	However, because the value $(A\mathbf{x})_\ell$ may be a few bits larger than the portion of the input it is overwriting, without additional tricks this technique would yield a multiplication algorithm requiring $\mathcal{O}(1)$ ancillas at each level of the recursive tree.
	Here we describe a construction by which these extra ancillas can be avoided.

 \begin{figure*}
		\begin{center}
			\includegraphics[width=0.8\textwidth]{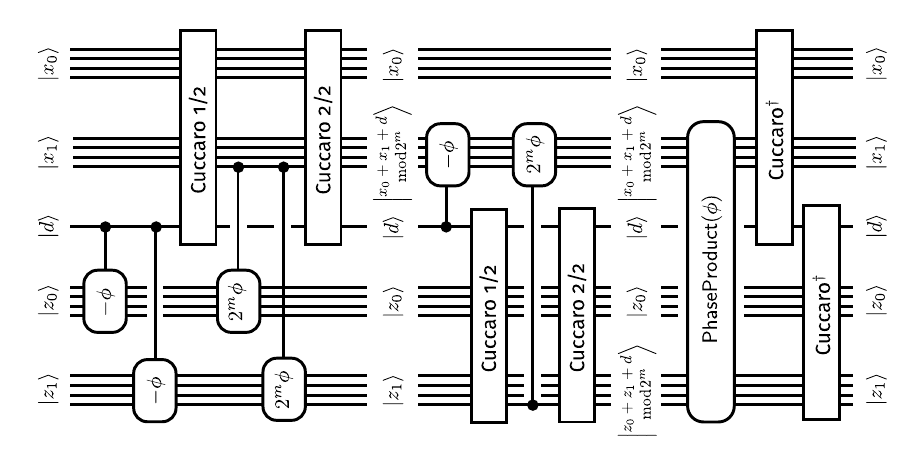}
		\end{center}
		\caption{A circuit for implementing a phase rotation by $\phi(x_0 + x_1)(z_0 + z_1)$, using only one dirty ancilla that is borrowed from another part of the computation.
			As discussed in the main text, this dirty ancilla can be borrowed from the inputs themselves, yielding an implementation that requires zero ancilla qubits, dirty or clean.
			Here, the unitaries with rounded corners represent phase rotations of the labeled factor times the value of the inputs on which the unitary is being applied.
			The operations labeled ``Cuccaro 1/2'' and ``Cuccaro 2/2'' represent the first and second halves of a Cuccaro adder, respectively.
			The unitaries controlled off of the top bit of a many-qubit register are controlled off of the qubit temporarily holding the value of the carry bit of the Cuccaro adder in progress.
			The operation ``Cuccaro$^\dag$'' is the Cuccaro adder performed backwards, uncomputing the sums.\label{fig:bit-shaving}}
	\end{figure*}
	
	\subsection{Simple case}

	For pedagogical purposes, let us first consider a simple example: the implementation of a \textsf{PhaseProduct} corresponding to a phase rotation of $\phi(x_0 + x_1)(z_0 + z_1)$, where $\phi$ is a classically-known phase factor, and we have as input the four registers $\ket{x_0}\ket{x_1}\ket{z_0}\ket{z_1}$.
	For simplicity in this first example, we will assume that all four registers are of the same length, which we can denote as $m$ bits.
	The straightforward method for implementing this phase rotation is to use a quantum adder to map $\ket{x_0}\ket{x_1} \to \ket{x_0}\ket{x_0 + x_1}$ (and similarly on the $z$ registers), and then use a recursive call to our \textsf{PhaseProduct} algorithm on the registers now holding $\ket{x_0 + x_1}$ and $\ket{z_0 + z_1}$.
	The issue is that the value $x_0 + x_1$ (resp. $z$) takes one more bit to represent than $x_1$, so naively we need to use an ancilla qubit to hold the ``carry bit'' produced at the top of the adder.
	This can be avoided as follows.
	
	Observe that we can split up the value $x_0 + x_1$ into the carry bit which we denote $x_c$ and the remaining $m$ bits of the sum, such that $x_0 + x_1 = 2^m x_c + (x_0 + x_1 \bmod 2^m)$ (and same for the $z$ values).
	Then with some rearrangement the product can be written as follows (the reason for the precise form of which will become clear promptly):
	\begin{widetext}
		\begin{equation}
			\phi (x_0 + x_1)(z_0 + z_1) =
			\phi 2^m x_c z_0 + \phi 2^m x_c z_1
			+ \phi 2^m z_c (x_0 + x_1 \bmod 2^m)
			+ \phi (x_0 + x_1 \bmod 2^m)(z_0 + z_1 \bmod 2^m)
			\label{eq:bit-shaving}
		\end{equation}
	\end{widetext}
	Thus we can implement our desired phase rotation by implementing separate rotations for each of the terms on the right hand side of Eq.~\ref{eq:bit-shaving}.
	
	Cuccaro's quantum adder~\cite{cuccaro_new_2004} has a variant that takes two length $m$ quantum integers, plus an ``incoming carry bit'' (of the same value as the least significant bits of the input), and computes the sum of the integers plus the incoming carry bit, writing the result into one of the input registers.
	It only requires one extra qubit, which is the qubit in which the \textit{outgoing} carry (e.g. $x_c$) is ultimately stored.
	Conveniently, the value of that outgoing carry is set simply with a CNOT from one of the other qubits, which temporarily holds the value $x_c$.
	This means that if we pause the Cuccaro adder mid-execution, we can forego the extra qubit needed to store the carry, as well as the CNOT setting it, instead using the value directly.
	Crucially, we have designed the first three terms of Eq.~\ref{eq:bit-shaving} such that they consist of the product of a single bit value by an $m$ bit value---and thus their phase rotation can be implemented directly via $m$ $CR_\phi$ gates, which is asymptotically negligible compared to the superlinear cost of the recursive call that is used to complete the last term in the sum.
	
	But what about the ``incoming carry bit'' into the Cuccaro adder?
	We don't have any incoming carry value here, but that qubit is crucial to the adder, so we cannot simply get rid of it (normally when there is no incoming carry, the Cuccaro adder uses an ancilla initialized to zero as the incoming carry).
	Instead we replace this incoming carry with a \textit{dirty qubit}---a qubit of unknown state, which we guarantee will be returned to its initial state after use.
	Dirty qubits are useful because they can be borrowed from idle parts of the computation, avoiding the allocation of an ancilla~\cite{barenco_elementary_1995, haner_factoring_2017, gidney_factoring_2018}.
	By linearity, if we can show that the correct operation is implemented with the dirty qubit in the $\ket{0}$ basis state and in the $\ket{1}$ basis state, the correct operation will be implemented using a dirty qubit in any state (even some complicated entangled one).
	The Cuccaro construction automatically returns the incoming carry to its original state, so we simply need to ensure that we can apply the correct phase rotation.
	The case in which the dirty qubit is in the $\ket{0}$ basis state is trivial---if the incoming carry is zero, the sum is unchanged.
	The tricky case is when the dirty qubit is in the $\ket{1}$ state, because the resulting sum will be one too large and then the phase rotation based on the resulting register will be slightly too large.
	We propose a simple fix: before starting the adder, perform a negative phase rotation controlled off of the dirty qubit, that exactly cancels the extra phase that will be applied due to the increase in the sum.
	This way, the total accumulated phase will be exactly the product that was originally desired.
	We show an explicit circuit for the application of the entire phase $\phi (x_0 + x_1)(z_0 + z_1)$ using only one, dirty ancilla in Figure~\ref{fig:bit-shaving}.

	Finally, we show how to avoid the need for even an external dirty ancilla qubit, making the multiplier entirely self-contained with zero ancillas.
	The idea is simple: stop the sums one bit earlier, not including the most significant bit of the inputs, and simply use one of these now-untouched bits as the dirty qubit input to the adder.
	Doing so will remove their value from the sum, but as was done with the carry bit, we can easily perform the portion of the \textsf{PhaseProduct} corresponding to those topmost bits by directly implementing the extra phase rotation via $m$ $CR_\phi$ gates.
	In fact, we are already performing $m$ $CR_\phi$ gates controlled off of the dirty ancilla to account for the extra phase it will add into the sum, so we can simply adjust those phase rotations to include the qubit's value when taken as the top bit of the inputs as well.
	
	\subsection{Full case}
	
	There are a few ways we must generalize the above to cover all situations that may arise in our algorithm: 1) the inputs may not be exactly the same length, for example if $n$ is not perfectly divisible by $k$; 2) there may be several terms in the sum, rather than simply two; 3) the terms of the sum may be multiplied by a constant; and 4) the \textsf{PhaseTripleProduct} has three factors instead of just two.
	It is easy to see how to handle the first three of these generalizations, by simply allowing more of the product to be done directly via $CR_\phi$ gates.
	As long as we ensure that the number of bits for which we perform the \textsf{PhaseProduct} directly is sufficiently small, it will not affect the asymptotic scaling.
	As shown in Figure~\ref{fig:bit-shaving}, we simply take the ``overlapping'' bits of the input and apply the algorithm from the previous sub-section to implement their \textsf{PhaseProduct}, directly using $\mathcal{O}(n)$ $CR_\phi$ gates to implement the phase rotation proportional to the other bits that do not overlap (a number of bits which is constant in $n$).
	We may use one of the non-overlapping bits as the dirty qubit input into the Cuccaro adder.
	For sums with multiple terms, we can simply apply the Cuccaro adder several times, making sure to rotate by an amount proportional to the carry bit of each sum when we have it available.
	By choosing the $w_\ell$ to be powers of 2 we ensure that any coefficients multiplying terms in our sums are also powers of two, and thus can be implemented as logical bit shifts in the inputs to the adders.
	
	\begin{figure}
		\includegraphics{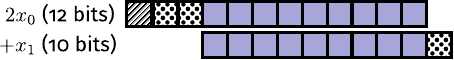}
		\caption{A diagram of the bits of two values, here a 12-bit value $x_0$ and a 10-bit value $x_1$ being summed to form the linear combination $2x_0 + x_1$.
		The setting is the implementation of a phase rotation proportional to, say, $(2x_0 + x_1)(2z_0 + z_1)$ ($z$ registers not shown).
		The purple bits are the ``main'' bits, which are summed together using Cuccaro's adder, with the top bit of $x_0$ (marked by diagonal hashes) being used as the ``incoming carry'' into Cuccaro's adder.
		Separately, a phase rotation is applied controlled off of that top bit of $x_0$, but adjusted for the fact that its value was carried into the sum of the main bits.
		The polka-dotted bits represent other extra ``non-overlapping'' bits whose phase rotations are implemented directly.
		Crucially, the number of extra bits is \textit{constant} in $n$, so the cost of performing the portion of the \textsf{PhaseProduct} involving them is asymptotically negligible.}
		\label{fig:main-extra-bits}
	\end{figure}
	
	The \textsf{PhaseTripleProduct} can be implemented without ancillas via essentially the same strategy, with one small hiccup: because there are three inputs, the portion of \textsf{PhaseTripleProduct} involving a constant number of qubits of one of the inputs requires $\mathcal{O}(1) \cdot \mathcal{O}(n) \cdot \mathcal{O}(n) = \mathcal{O}(n^2)$ gates to implement via the schoolbook algorithm of Eq.~\ref{eq:fourier-qq-schoolbook}, which is too many to maintain the larger algorithm's asymptotic scaling.
	Fortunately there is an easy workaround: a \textsf{PhaseTripleProduct} in which one of the inputs is a single qubit is equivalent to a \textsf{PhaseProduct} controlled off of that single bit.
	So, we may simply implement that portion of the operation via a constant number of controlled fast \textsf{PhaseProduct} operations, making sure to set $k$ in such a way that the asymptotic scaling of each \textsf{PhaseProduct} is better than the asymptotic scaling of the overall \textsf{PhaseTripleProduct} algorithm.
	Thus the contribution of these controlled \textsf{PhaseProduct} operations is asymptotically negligible and the overall asymptotic scaling of the fast \textsf{PhaseTripleProduct} is maintained.
	
	\section{Proof of circuit depth}
	\label{app:depth-proof}
	
	In this section of the Appendix, we prove Claim~\ref{claim:depth}, originally stated in Section~\ref{sec:spacetime} and reproduced here:
	
	\depthclaim*{}
	
	\begin{proof}
		The claim is equivalent to the statement that \textsf{PhaseProduct} and \textsf{PhaseTripleProduct} can be implemented in depth $\mathcal{O}(n^\epsilon)$ using $\mathcal{O}(n/\log n)$ ancilla qubits and $\mathcal{O}(n^{1+\epsilon})$ gates, since these operations conjugated by the QFT implement multiplication (and the ancilla qubits can be shared between the QFTs and the phase rotation circuits).
		Thus we proceed by proving that statement.
		For simplicity, we first give the proof for \textsf{PhaseProduct}; at the end we show that extending the proof to \textsf{PhaseTripleProduct} only requires trivial modifications.
		
		Let $\{w_\ell \}$ be the set of $q = 2k-1$ points at which the polynomials will be evaluated.
		Let $A_1$ be the $k \times k$ Vandermonde matrix formed from the first $k$ points, $w_0$ through $w_{k-1}$, and let $A_2$ be the $(k-1) \times k$ Vandermonde matrix for the $k-1$ points $w_{k}$ through $w_{q-1}$.
		The broad structure of the algorithm is as follows: first, the vectors $\bm{x}$ and $\bm{z}$ are each multiplied in-place by the matrix $A_1$, and then \textsf{PhaseProduct} is called in parallel on each of the $k$ elements of the result.
		Once those recursive calls are complete, the matrix-vector product with $A_1$ is inverted, and then the inputs are multiplied by $A_2$, such that \textsf{PhaseProduct} can again be applied in parallel, for the remaining $k-1$ $w_\ell$.
		Finally the matrix-vector product with $A_2$ is inverted, returning the vectors to their original values.
		
		The crux of the proof is to show that the matrix-vector multiplications can be performed in low depth without a large ancilla count.
		Consider a circuit $\mathcal{C}_A$ which implements the isometry $\ket{\bm{x}} \to \ket{A\bm{x}}\ket{g}$ for a classical integer matrix $A$ (which we will set to $A_1$ and $A_2$) where $\ket{g}$ is an ancillary ``garbage'' register that is allowed to be set to an arbitrary value by the circuit.
		Observe that this operation is sufficient for our needs: after the recursive calls to \textsf{PhaseProduct}, $\mathcal{C}_A$ is simply run backwards, uncomputing $g$ and returning the input register to $\ket{\bm{x}}$.
		For input register size $t$ qubits, let $D_{\mathcal{C}_A}(t)$ and $G_{\mathcal{C}_A}(t)$ be the depth and gate count, respectively, of $\mathcal{C}_A$, and let $S_{\mathcal{C}_A}(t)$ be the number of ``extra'' qubits used by $\mathcal{C}_A$ in addition to the $t$ input qubits.
		We use $S$ instead of $A$ to denote this extra qubit count, because unlike the ancillas discussed elsewhere in this manuscript, we do not require that these qubits are reset to zero by the end the circuit; in particular, they include $\ket{g}$ as well as the overflow bits used to accommodate the fact that $A\bm{x}$ is a few bits larger than $\bm{x}$.
		Then, the cost of the parallel algorithm for \textsf{PhaseProduct} described in the previous paragraph obeys the following recursion relations:
		\begin{itemize}
			\item Depth: $D(t) = 2D(t/k) + 2 D_{\mathcal{C}_{A_1}}(t) + 2 D_{\mathcal{C}_{A_2}}(t)$
			\item Ancillas: $A(t) = k A(t/k) + 2 \cdot \max(S_{\mathcal{C}_{A_1}}(t), S_{\mathcal{C}_{A_2}}(t))$
			\item Gates: $G(t) = (2k-1) G(t/k) + 4 G_{\mathcal{C}_{A_1}}(t) + 4 G_{\mathcal{C}_{A_2}}(t)$
		\end{itemize}
		In the next paragraphs we show how to construct $\mathcal{C}_{A_1}$ and $\mathcal{C}_{A_2}$ efficiently, in order to achieve the complexities stated in the Claim.
		
		The in-place matrix-vector product is straightforward for triangular matrices, by iterating through the matrix one row at a time (starting with the top row for an upper triangular matrix, and the bottom row for a lower triangular one).
		The matrices $A_1$ and $A_2$ are full rank, and thus can each be decomposed into the product of two triangular matrices via the LU decomposition (up to reordering of the rows), which should be precomputed classically.
		As shown in Appendix~\ref{app:fractional_w}, one may arbitrarily rescale rows of $A_1$ and $A_2$, which can be used to ensure that the LU decomposition consists entirely of integer values.
		Thus the quantum in-place matrix-vector product can thus be computed simply via a series of integer products, some in-place (corresponding to diagonal elements of the LU matrices) and some out-of-place (corresponding to off-diagonal elements).
		The total number of integer products is proportional to the size of the matrices $A_1$ and $A_2$, which is independent of $n$.
		For all of the integer products, one input is quantum and the other is an $O(1)$-bit classical value.
		
		Out-of-place integer multiplication by small classical constants is straightforward: to implement $\ket{x}\ket{y} \to \ket{x}\ket{y+cx}$ for an $\mathcal{O}(1)$-bit classical constant $c$, perform a shifted addition $\ket{x}\ket{y} \to \ket{x}\ket{y + 2^i x}$, for each bit $c_i$ of $c$ that is 1.
		Addition of quantum values can be performed in $\mathcal{O}(\log n)$ depth using $\mathcal{O}(n/\log n)$ ancilla qubits (which are released after the operation) and $\mathcal{O}(n)$ total gates~\cite{takahashi_fast_2008}; the out-of-place multiply-add can be performed in a constant number of these additions since $c$ has $O(1)$ bits.
		
		In-place integer multiplication is less straightforward, and seems to require the use of the output garbarge register $\ket{g}$ to be performed in polylogarithmic depth.
		We show in Lemma~\ref{lem:in-place-constant-mult} that for a classical constant $c$ whose bit length is independent of $t$ (the length of the quantum input), the isometry $\ket{x} \to \ket{cx}\ket{g}$ can be implemented with depth $\mathcal{O}(\operatorname{polylog}(t))$, extra space $\mathcal{O}(t/\log^2 t)$, and gate count $\mathcal{O}(t)$.
		
		Thus, we find that overall, the matrix-vector multiplications can be performed in $D_{\mathcal{C}_A}(t) = \mathcal{O}(\operatorname{polylog}(t))$, $S_{\mathcal{C}_A}(t) = \mathcal{O}(t/\log^2 t)$, and $G_{\mathcal{C}_A}(t) = \mathcal{O}(t)$.
		Using these values in the recursion relations for \textsf{PhaseProduct} yields the complexities:
		\begin{itemize}
			\item Depth: $D(n) = \mathcal{O}(n^{\log_k 2})$
			\item Ancillas: $A(n) = \mathcal{O}(n/\log n)$
			\item Gates: $G(n) = \mathcal{O}(n^{\log_k (2k-1)})$
		\end{itemize}
		The choice of when to stop the recursion is important to achieve this ancilla qubit count.
		If we were to recurse to a base case of size $O(1)$, each of the $O(n)$ leaves of the recursive tree would require $O(1)$ ancillas.
		Instead we observe that a base case of size $t_b$ can be implemented via the schoolbook algorithm in depth $O(t_b)$ using zero ancillas; thus, we can switch to the base case at a size of $t_b = \mathcal{O}(n^{\log_k 2})$ without adversely affecting the depth, which maintains the sublinear qubit count. 
		
		The same proof applies to \textsf{PhaseTripleProduct}, except that $q=3k-2$, and we thus must apply the recursive calls in three layers, instead of two.
		This yields the following complexities for \textsf{PhaseTripleProduct}:
		\begin{itemize}
			\item Depth: $D(n) = \mathcal{O}(n^{\log_k 3})$
			\item Ancillas: $A(n) = \mathcal{O}(n/\log n)$
			\item Gates: $G(n) = \mathcal{O}(n^{\log_k (3k-2)})$
		\end{itemize}
		
		For both \textsf{PhaseProduct} and \textsf{PhaseTripleProduct}, for any $\epsilon > 0$ there exists a value of the constant $k$ which yields depth smaller than $O(n^\epsilon)$ and gate count smaller than $O(n^{1+\epsilon})$, completing the proof.
	\end{proof}
	
	\begin{lemma}\label{lem:in-place-constant-mult}
		Given an $n$-qubit register $\ket{x}$ and $O(1)$-bit classical constant $c$, the isometry $\ket{x} \to \ket{cx}\ket{g(x)}$, where $g$ is unspecified ``garbage'' data, can be performed in depth $\mathcal{O}(\operatorname{polylog}(n))$ and gate count $O(n)$ using a total of $n + O(n/\log^2 n)$ qubits including the input register.
	\end{lemma}
	
	\begin{proof}
		The folowing algorithm achieves the desired operation:
		\begin{enumerate}
			\item Let $d$ be the integer such that $c' = c/2^d$ is an odd integer.
			\item Divide $\ket{x}$ into blocks of $m = O(\log^2 n)$ qubits each. Label the value of block $i$ $\ket{X_i}$.
			\item Locally perform the mapping $\ket{X_i} \to \ket{c' X_i \bmod 2^m}\ket{B_i}$ on each block, where $B_i = \lfloor c' X_i / 2^m \rfloor$ is the value of the $\mathcal{O}(1)$ ``overflow'' bits, stored in ancilla qubits, and $\ket{c' X_i \bmod 2^m}$ overwrites block $i$ of the input register.
			\item Increment the entire input register by $b = \sum_i 2^{mi} B_i$.
			\item Multiply the input register by $2^d$ (via a bit shift).
		\end{enumerate}
		
		We now briefly explain each step, and how it can be performed efficiently.
		Steps 1 and 2 are entirely classical.
		In step 3, the mapping of $\ket{X_i} \to \ket{c'X_i}$ on each block (here not distinguishing between overflow and input qubits) can be implemented in depth $O(\operatorname{polylog} n)$ using $O(1)$ ancillas per block, by performing an in-place addition of the value $2^j(c'-1)$ controlled off of each bit $j$ of $X_i$.
		By iterating from the most significant bit of $X_i$ to the least significant, it is ensured that later control bits are not affected by previous additions.
		Here we may use standard linear-depth adders, because the size of the additions is the block size $m = \mathcal{O}(\log^2 n)$.
		At the end of step 3, the full state can be written $\ket{c'x-b}\ket{b}$ for $b = \sum_i 2^{mi} B_i$ as above, where here the first register is the input register, and the second is a register of $O(n/\log^2 n)$ ancilla qubits representing $b$ in a sparse format of $O(1)$ qubits per $m$-qubit block (the rest of the bits of $b$ are zero).
		Step 4 then maps the state to $\ket{c'x}\ket{b}$; this mapping can be performed in depth $\mathcal{O}(\log n)$ using $O(1)$ ancillas per block (that are returned to $\ket{0}$ immediately) via a low-space quantum carry-lookahead adder construction~\cite{takahashi_fast_2008}, modified slightly to account for the fact that $b$ is sparse.
		Finally, the bitshift of step 5 by a classical constant $d$ yields a final state $\ket{cx}\ket{b}$, which is the desired final state (with $g(x) = b$).
	\end{proof}
	
	\section{In-place modular multiplication}
	\label{app:shor}
	
	The core of Shor's algorithm for factoring an integer of $n$ bits can be implemented via a series of in-place multiplications by classical constants, controlled off of a single qubit (see the ``one controlling qubit'' trick, Sec. 2.4 of~\cite{beauregard_circuit_2003}).
	Written out, a single one of these multiplications consists of the in-place operation
	\begin{equation}
		\ket{x} \to \ket{c x \bmod N}
		\label{eq:inplace-mult}
	\end{equation}
	for a classical integer $c$.
	Here we apply the modular multiplication introduced in Sec.~\ref{sec:applications} to implement this operation to within an error $\eta$ using $2n + \mathcal{O}(\log 1/\eta)$ qubits (while maintaining the same subquadratic gate count of the multiplication algorithm).
	
	To do so, we make the following observation: our classical-quantum multiplication algorithm does not require that the classical value $c$ be an integer.
	It can be a floating-point number, which we may classically compute to whatever arbitrary precision we desire before using it to compute the values of the phase rotations in our multiplication algorithm.
	We use this fact to perform roughly the following trick: compute the fractional value $w = (cx \bmod N)/N$ up to some precision $m$, and then multiply by $N$ to convert the fractional value $w$ into an integer.
	The accuracy of this operation will depend on the precision to which we compute $w$; however, we will find that we only need $\mathcal{O}(\log(1/\eta))$ extra bits to achieve an error of less than $\eta$.
	
	Algorithm~\ref{alg:in-place} applies this idea to approximately implement the unitary of Eq.~\ref{eq:inplace-mult}.
	It is clear by inspection that it uses $2n + \mathcal{O}(\log 1/\eta)$ total qubits and $\mathcal{O}(n^{\log_k q})$ gates.
	In the following theorem we prove the error bound.
	
	\begin{algorithm2e}
		\caption{In-place classical-quantum modular multiplication}
		\label{alg:in-place}
		\vspace{5pt}
		\SetKwInOut{Input}{Input}
		\Input{%
			\ Quantum state $\ket{x}$ (extended to superpositions by linearity) \\
			\ Classical constant $c$ \\
			\ Error level $\eta$
		}
		\SetKwInOut{Output}{Output}
		\Output{%
			\ Quantum state $\ket{cx \bmod N}$ (up to error $\eta$)
		}
		\vspace{5pt}
		Let $m = n + \lceil 2 \log (2 + 1/2\eta) \rceil$ \\
		Allocate a register of $m$ ancillas initialized to $\ket{0}$ \\
		\vspace{5pt}
		\nl Compute $\ket{x}\ket{0} \to \ket{x}\ket{w}$ for $w = ((c-1)x \mod N) / N$ via classical-quantum multiplication \\
		\nl Adding one ancilla to the top of the first register, compute $\ket{x}\ket{w} \to \ket{x + Nw}\ket{w}$. State is now (approximately) $\ket{cx \bmod N}\ket{w}$ or $\ket{(cx \bmod N) + N}\ket{w}$. \\
		\nl Using an ancilla qubit, compute whether the left register is greater than $N$; subtract $N$ controlled by the ancilla. State is now $\ket{cx \bmod N}\ket{w}$ \\
		\nl Uncompute the ancilla qubit by computing whether $cx \bmod N < Nw $ via a comparison operator \\
		\nl Subtract the value $w = ((1 - c^{-1})(cx) \bmod N) / N$ from the second register, where $c^{-1}$ is the multiplicative inverse of $c \pmod N$. State is now $\ket{cx \bmod N}\ket{0}$.
	\end{algorithm2e}
	
	\begin{thm}
		The final state $\ket{\psi}$ produced by Algorithm~\ref{alg:in-place} has $|1 - \braket{\psi | cx \bmod N}| < \mathcal{O}(\eta)$, for arbitrary inputs $\ket{x}$ and $c$.
	\end{thm}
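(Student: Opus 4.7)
The plan is to track the state through the five steps of Algorithm~\ref{alg:in-place}, bounding the total error by the sum of the errors introduced by the two approximate classical-quantum multiplications (steps~1 and~5). Fix an arbitrary input basis state $\ket{x}$ (the general case follows by linearity), and set $w_x^{\star} := ((c-1)x \bmod N)/N \in [0,1)$ and $W_x^{\star} := \lceil 2^m w_x^{\star} \rfloor$.

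First, I would analyze step~1 in isolation. The modular classical-quantum multiplication described in Section~\ref{sec:applications} wraps the phase $\exp(2\pi i (c-1)xz/N)$ between a QFT and an IQFT on the $m$-qubit output register. By the phase-estimation precision bound of Section~5.2.1 of~\cite{nielsen_quantum_2011}, the choice $m = n + \lceil 2\log(2 + 1/(2\eta))\rceil$ is sufficient to ensure that the resulting state $\ket{\phi_x}$ lies within $\ell_2$-distance $\mathcal{O}(\eta)$ of $\ket{W_x^{\star}}$; equivalently, with amplitude $1 - \mathcal{O}(\eta)$ the ancilla register is in the ``good'' basis state $\ket{W_x^{\star}}$.

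Next I would verify that steps~2--4 act exactly as advertised on the good branch $\ket{x}\ket{W_x^{\star}}$. Step~2 adds the integer $\lceil N W_x^{\star}/2^m \rfloor$ to $x$; because $|N w_x^{\star} - N W_x^{\star}/2^m| \leq N \cdot 2^{-(m+1)}$ and the choice of $m$ makes this much smaller than $1$, the added integer equals $(c-1)x \bmod N$ exactly. Hence the first register becomes $x + (c-1)x \bmod N \in [0, 2N)$, and step~3's conditional subtraction reduces this to $cx \bmod N$. Step~4 uncomputes the ancilla of step~3 by using the identity that its overflow flag is equivalent to the predicate $cx \bmod N < N w_x^{\star}$. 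The net effect on the good branch is the exact map $\ket{x}\ket{W_x^{\star}} \mapsto \ket{cx \bmod N}\ket{W_x^{\star}}$, while the $\mathcal{O}(\eta)$-weight bad component evolves unitarily and hence contributes no more than $\mathcal{O}(\eta)$ error overall.

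Finally, step~5 applies an approximate classical-quantum multiplication with classical constant $(1 - c^{-1})/N$ to the register now holding $cx \bmod N$, and subtracts the result from the ancilla register. Because $(1 - c^{-1})(cx) \equiv (c-1)x \pmod{N}$, the same target value $W_x^{\star}$ is produced (up to another $\mathcal{O}(\eta)$ phase-estimation error), zeroing out the ancilla. By unitarity of every step and the triangle inequality, the total $\ell_2$-distance of the final state from $\ket{cx \bmod N}\ket{0}$ is $\mathcal{O}(\eta)$, which implies the claimed overlap bound. The main obstacle is the bookkeeping for the bad branch: one must argue that although those terms need not round correctly in step~2, every subsequent operation is exactly unitary, so the initial $\mathcal{O}(\eta)$ error is preserved rather than amplified through the rest of the circuit.
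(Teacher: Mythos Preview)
There is a genuine gap in your treatment of steps~1 and~2. Your claim that after step~1 the ancilla is within $\ell_2$-distance $\mathcal{O}(\eta)$ of the \emph{single} basis state $\ket{W_x^\star}=\ket{\lceil 2^m w_x^\star\rfloor}$ is false: for a generic phase $w_x^\star$ lying near the midpoint of two adjacent $m$-bit fractions, the amplitude on the nearest basis state is only $\approx 2/\pi$, independently of $m$ and $\eta$. What the phase-estimation bound actually gives (and what the choice $m=n+\lceil 2\log(2+1/(2\eta))\rceil$ is designed for) is concentration on a \emph{window} $S=\{\tilde w:|\tilde w-w_x^\star|<\eta/2^n\}$ of many basis states. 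The paper's proof therefore truncates to $\sum_{\tilde w\in S}\alpha_{\tilde w}\ket{\tilde w}$ rather than projecting onto a single $\ket{W_x^\star}$.

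This matters because step~2 is not an exact integer addition. It is implemented by a QFT on the first register followed by a $\mathsf{PhaseProduct}$ with a non-integer classical factor, so on any single ancilla basis state $\ket{\tilde w}$ it leaves the first register in a genuine superposition, not in $\ket{x+\lceil N\tilde w\rfloor}$. The reason the algorithm works is that for every $\tilde w\in S$ one has $|N\tilde w-((c-1)x\bmod N)|=\mathcal{O}(\eta)$, so the phase applied in step~2 differs from the ideal integer phase by at most $\mathcal{O}(\eta)$ \emph{uniformly over the window}; the paper then replaces the state by the one carrying the exact integer phase, at cost $\mathcal{O}(\eta)$, and only after the inverse QFT does the first register become the clean basis state $\ket{x+((c-1)x\bmod N)}$. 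Your argument skips both of these approximations. (A smaller point: the paper treats step~5 as the exact inverse of step~1, since $(1-c^{-1})(cx)\equiv(c-1)x\pmod N$, so no additional error is charged there; your more conservative accounting is harmless but unnecessary.)
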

	\begin{proof}
		We enumerate each step of the algorithm, at various points replacing the state at that step with another state that is $\eta$-close (measured via inner product as above).
		By the triangle inequality and the fact that unitary transformations preserve inner products, we thus prove the theorem.
		
		We begin with step 1.
		Applying Hadamard gates to the second register to generate $\ket{x}\sum_z \ket{z}$, then a phase rotation of $\phi = 2\pi wz$ and then an inverse $\mathsf{QFT}$ modulo $2^m$ yields a state which in full generality can be written
		\begin{equation}
			\ket{x} \sum_{\tilde{w}} \alpha_{\tilde{w}} \ket{\tilde{w}}
		\end{equation}
		where the sum is over all $m$-bit binary fractions $\tilde{w}$.
		Let $S$ be the set of $m$-bit binary fractions for which $|w-\tilde{w}| < \eta/2^n$.
		The key to Algorithm~\ref{alg:in-place} is that from quantum phase estimation (see Sec.~5.2.1 of~\cite{nielsen_quantum_2011}), we have that $\sum_{\tilde{w} \in S} \alpha_{\tilde{w}} \ket{\tilde{w}} \approx \sum_{\tilde{w}} \alpha_{\tilde{w}} \ket{\tilde{w}}$; formally, the inner product of the actual state and the truncated state in which only terms with $|w-\tilde{w}| < \eta/2^n$ are kept is within $\mathcal{O}(\eta)$ of 1.
		Therefore, we may consider for the analysis of step 2 the truncated state for which $|w-\tilde{w}| < \eta/2^n$ for all terms in the superposition.
		
		Applying a quantum Fourier transform now to the register containing $\ket{x}$, and then applying $\mathsf{PhaseProduct}(2\pi 2^{2n-m}/N)$ to the two registers (where the second register's value is now considered as an integer) yields the state
		\begin{equation}
			\sum_z e^{2\pi i (x+((c-1)x \bmod N)z/2^{n+1}} \ket{z} \sum_{\tilde{w}} \alpha_{\tilde{w}} \ket{\tilde{w}}
		\end{equation}
		up to a phase error of at most $\mathcal{O}(\eta)$.
		Replacing that state with the state having precisely the correct phase, and applying an inverse quantum Fourier transform to the first register yields
		\begin{equation}
			\ket{x+((c-1)x \bmod N)} \sum_{\tilde{w}} \alpha_{\tilde{w}} \ket{\tilde{w}}.
		\end{equation}
		The conditional subtraction of steps 3 and 4 is exact, yielding the state
		\begin{equation}
			\ket{cx \bmod N)} \sum_{\tilde{w}} \alpha_{\tilde{w}} \ket{\tilde{w}}.
		\end{equation}
		Now performing step 5 by precisely inverting the operations of step 1 that produced the state of the second register yields the all zero state on the second register, completing the computation.
	\end{proof}
	
	For Shor's algorithm, this operation must be repeated $\mathcal{O}(n)$ times.
	Thus for each product we set $\eta = \mathcal{O}(1/n)$ so that the error of the overall algorithm does not grow with $n$.
	This yields a total qubit count of $2n + \mathcal{O}(\log n/\eta')$, where $\eta'$ is the error over the entire algorithm.

 	\section{Estimation of gate and qubit counts}
    \label{app:estimation_details}
    
    In this section, we describe the details of the resource estimates presented in Tables~\ref{tab:cq-2048-costs} and \ref{tab:x2modN-costs} of the main text.
    The resource counts for previous works were computed using code presented with those works; in this section we describe the estimation for the algorithms presented in this work. 
    The code used to perform these estimates is available online: \url{https://zenodo.org/doi/10.5281/zenodo.10871109}.
    
    \subsection{Recursive structure and choice of $k$}
    
    One of the most critical choices for efficiency is the choice of $k$, the number of pieces into which we divide the values being multiplied.
    Because increasing $k$ decreases the asymptotic cost but increases costant factors, the obvious strategy is to find some ``cutoff'' value for each $k$, and use the largest $k$ for which the desired product is above the cutoff.
    However, the situation is complicated by the fact that in practice, the cost for a particular $k$ does not increase smoothly with $n$---there is a range of ``overlap'' where neighboring $k$ alternately jump past each other in efficiency. 
  	(This is due to the fact that $k$ in general will not evenly divide $n$).
  	It is complicated even further by consideration that the linear combinations like $(Ax)_\ell$ may be slightly different bit lengths for different $\ell$, so it may be optimal to use different $k$ for different $\ell$, even at the same level of recursion.
    
    Because of all of those complications, for each recursive call we simply exhaustively search through all $k$, and choose the construction that yields the lowest cost (for some cost function that we may define; we use the number of $CR_\phi$ gates as our cost function).
    In practice, we can perform this exhaustive search in a reasonable amount of time because the tree depth is logarithmic in $n$, and we memoize the optimal result for each value of $n$ encountered in the recursion.
    
    \subsection{Evaluation points $w_\ell$}
    
	We choose the $q$ evaluation points $w_\ell$ as follows.
	We always include $0$ and $\infty$ because both can be evaluated directly from the inputs, without the need for any linear combinations.
	Next we include $-1$ and $1$ if needed, prioritizing $-1$ because the absolute value of the resulting linear combination is bounded to a smaller value.
	For $q>4$, we then use the points $-1/\omega, 1/\omega, -\omega, \omega$ in that order, for $\omega$ powers of 2.
	We use powers of 2 because then the resulting coefficients are powers of 2 and thus values can be scaled by them via a logical bit shift.
    We prioritize unit fractions over whole numbers because it helps reduce the size of the linear combinations: the most-significant chunk (e.g. $x_{k-1}$) will in general be a few bits longer than all the rest, because $n$ will in general not be evenly divisble by $k$.
    Thus we would like to scale it by the smallest coefficient, which occurs with unit fraction $w_\ell$ (see Appendix~\ref{app:fractional_w}).
    We prioritize negative coefficients for the same reason as above with $-1$, that it helps reduce the absolute value of the linear combination which reduces the ancilla counts.
    
    For pairs of $w_\ell$ that are the negative of each other, every negative term is summed together first, for a value we may denote $x_1'$.
    Then $x_{-w_\ell} = x_{w_\ell} - 2x_1'$, requiring a single addition/subtraction.
    Using this strategy, computing and uncomputing both $x_{w_\ell}$ for a pair $\pm w_\ell$ requires a total of $2k-1$ additions, while computing and uncomputing a single unpaired $w_\ell$ requires $2k-2$ additions.
    This strategy requires a few extra ancilla qubits, to hold the overflow of both the intermediate sum register and the register holding the final sum result.
    
    In all cases, the register we choose to overwrite with the linear combinations is the most significant (e.g. $x_{k-1}$), because it may be a few bits longer than the others and thus require fewer ancilla qubits to store the value.
    
    Next, we describe certain optimizations that are specific to the individual applications described in the main text.

    \subsection{2048-bit classical-quantum multiplication}
    
    For non-modular arithmetic, the output register for this operation is twice as long as the input.
    Thus, in that case we implement the whole \textsf{PhaseProduct} as two simultaneous \textsf{PhaseProduct} operations, corresponding to the phase rotations for the low and high halves of the output $z$ register respectively.
    We perform both simultaneously so that we only have to compute the linear combinations on the input $x$ register once rather than twice.
    This uses a small amount more ancillas because the same ancillas cannot be reused for each of the two halves of the output register; we consider this tradeoff worth the reduction in gate counts.
    
    The base case we use in this construction is very simple: we simply apply $n_\mathrm{base}^2$ $CR_\phi$ gates, as in Eq.~\ref{eq:fourier-schoolbook}.
    As discussed in Sec.~\ref{sec:applications} of the main text, it is possible to reduce the number of $CR_\phi$ gates further by performing the base case product explicitly into an ancilla register and applying $R_\phi$ gates on the result.
    
    \subsection{$x^2 \bmod N$ proof of quantumness}
    
    A powerful feature of the proof of quantumness protocol is that ``garbage bits'' can be discarded for free via measurement~\cite{kahanamoku-meyer_classically_2022}.
    We use this fact to reduce gate counts at the expense of a moderate number of ancillas, by storing the $(A\mathbf{x})_i$ etc. in separate ancilla registers and then measuring them away instead of using adder circuits to uncompute them.
    This corresponds to the ``fast'' version of our circuits, listed in Table~\ref{tab:x2modN-costs}.
    The ``balanced'' version computes and uncomputes the sums in-place, requiring a few more gates but many fewer ancillas.
    
    The ``narrow'' version uses the decomposition $x^2z = z_0(x_0^2 + x_0 x_1 + x_1^2) + z_1(x_0^2 + x_0 x_1 + x_1^2)$ to reduce qubit counts quite a bit further, at the expense of some more gates.
    In particular, it first computes the low half of the bits of the output $x^2 \bmod N$ via the first term above (that proportional to $z_0$), performing half of the final QFT and measuring the result.
    Then, the same $n/2$ qubits can be reused to compute the high half of the output bits and the remaining half of the final QFT.
    This yields a total qubit count of roughly only $3n/2$ qubits because all of the bits of the $n$-bit long output never need to be stored at the same time.
    For the products themselves, we compute the additions in-place as in the ``balanced'' construction to keep the qubit counts low.
    (It is possible to take decompositions like this further to reduce qubit counts even more, but in preliminary exploration we find that the increase in gate counts is not worth the trade off).
    
    For this construction, we use a more involved base case than in the classical-quantum case.
    Notationally, denote the phase rotation we desire to implement in the base case as $\phi' x'^2 z'$.
    Broadly, as described in Sec.~\ref{sec:applications} of the main text, the idea is to replace the $n_\mathrm{base}^3$ $CCR_\phi$ gates that would be required to implement the schoolbook \textsf{PhaseTripleProduct} of Eq.~\ref{eq:fourier-qq-schoolbook} with $2n_\mathrm{base}^2$ $CR_\phi$ gates, by explicitly computing the product $x'^2$ and then performing $CR_\phi$ gates between the bits of that product and the bits of $z'$.
    However, because of the ability to freely uncompute garbage bits, we can avoid having to allocate $2n_\mathrm{base}$ qubits to store $x'^2$ in its entirety.
    Instead, we decompose $x'^2 = \sum_i 2^i \sum_j x'_j x'_{i-j}$---that is, we group the bit products by the power of 2 they are scaled by.
    Following this grouping, we first compute the sum for $i=0$, apply the $CR_\phi$ gates that involve the least significant bit of the result, and then immediately measure it away to uncompute it.
    Next we compute $i=1$, summing into any existing carry bits from $i=0$, and again perform controlled rotations off of the least significant bit before measuring it away.
    We continue this process for all $i$.
    Throughout this process, we will need to store at most $\log_2 n_\mathrm{base}$ carry bits, a substantial improvement over the $2n_\mathrm{base}$ ancillas that would be naively required.
    
    Finally, we note that in the quantum-quantum case, the $k=2$ algorithm scales as $\mathcal{O}(n^2)$ just like the base case, but is observed to have worse constants.
    Therefore we always move directly to the base case from $k=3$, never using the $k=2$ construction.
	
	\section{Parallel sequences}
	\label{app:parallel-sequences}
	
	In Tables~\ref{tab:k3_cq_ops}, \ref{tab:k3_qq_ops}, and \ref{tab:k4_qq_ops}, we provide examples of sequences of operations that allow for the parallel computation of branches of the recursive tree, reducing the overall depth of the circuit.
	
	\begin{table*}
		\begin{center}
			\begin{tabular}{|c|c|c|c|c|c|c|}
				\hline
				Operation & Register 0 & Register 1 & Register 2 \\
				\hline
				(start) & $\ket{x_0}$  & $\ket{x_1}$ & $\ket{x_2}$ \\
				Add reg. 2 to reg. 1 & $\ket{x_0}$  & $\ket{x_1 + x_2}$ & $\ket{x_2}$ \\
				Add reg. 0 to reg. 1 & $\ket{x_0}$  & $\ket{x_0 + x_1 + x_2}$ & $\ket{x_2}$ \\
				\textbf{Product on all registers} & $\mathbf{\ket{x_0}}$  & $\mathbf{\ket{x_0 + x_1 + x_2}}$ & $\mathbf{\ket{x_2}}$ \\
				Invert sign of reg. 1 & $\ket{x_0}$  & $\ket{-x_0 - x_1 - x_2}$ & $\ket{x_2}$ \\
				Add reg. 0 to reg. 1 & $\ket{x_0}$  & $\ket{- x_1 - x_2}$ & $\ket{x_2}$ \\
				Add $2\times$ reg. 2 to reg. 1 & $\ket{x_0}$  & $\ket{- x_1 + x_2}$ & $\ket{x_2}$ \\
				Add reg. 1 to reg. 0 & $\ket{x_0 - x_1 + x_2}$  & $\ket{- x_1 + x_2}$ & $\ket{x_2}$ \\
				Add reg. 0 to reg. 1 & $\ket{x_0 - x_1 + x_2}$  & $\ket{x_0 - 2x_1 + 2x_2}$ & $\ket{x_2}$ \\
				Add $2\times$ reg. 2 to reg. 1 & $\ket{x_0 - x_1 + x_2}$  & $\ket{x_0 - 2x_1 + 4x_2}$ & $\ket{x_2}$ \\
				\textbf{Product on regs. 1 and 0} & $\mathbf{\ket{x_0 - x_1 + x_2}}$  & $\mathbf{\ket{x_0 - 2x_1 + 4x_2}}$ & $\ket{x_2}$ \\
				Invert sign of reg. 1 & $\ket{x_0 - x_1 + x_2}$  & $\ket{- x_0 + 2x_1 - 4x_2}$ & $\ket{x_2}$ \\
				Add $2\times$ reg. 2 to reg. 1 & $\ket{x_0 - x_1 + x_2}$  & $\ket{-x_0 +2x_1 -2x_2}$ & $\ket{x_2}$ \\
				Add reg. 1 to $2\times$ reg. 0 & $\ket{x_0}$  & $\ket{-x_0 +2x_1 -2x_2}$ & $\ket{x_2}$ \\
				Add reg. 0 to reg. 1 & $\ket{x_0}$  & $\ket{2x_1 - 2x_2}$ & $\ket{x_2}$ \\
				Add $2\times$ reg. 2 to reg. 1 & $\ket{x_0}$  & $\ket{2x_1}$ & $\ket{x_2}$ \\
				Divide reg. 1 by two & $\ket{x_0}$  & $\ket{x_1}$ & $\ket{x_2}$ \\
				\hline
			\end{tabular}
		\end{center}
		\caption{
			\label{tab:k3_cq_ops}
			\textbf{$\mathbf{k=3}$ parallel sequence for \textsf{PhaseProduct}.}
			This table lists the quantum operations performed to implement the unitary $\tilde{\mathcal{U}}_{q\times c}(a)$.
			The registers are divided into subregisters as $\ket{x} = \ket{x_0}\ket{x_1}\ket{x_2}$ and $\ket{z} = \ket{z_0}\ket{z_1} \ket{z_2}$ (using little-endian notation, so $x_0$ is the least-significant subregister).
			In this table only the state of the $x$ sub-registers are shown; the same operations are applied to the $z$ register.
			``Product on registers'' means to apply a phase corresponding to the product of the respective $x$ and $z$ registers, usually by recursively calling the same algorithm again.
			Registers containing values upon which the algorithm is applied recursively are highlighted in bold.
			The linear combinations used here for the products correspond to the evaluation points $w_\ell \in \{0, \infty, \pm 1, -2\}$.
		}
	\end{table*}
	
	\begin{table*}
		\begin{center}
			\begin{tabular}{|c|c|c|c|}
				\hline
				Operation & Register 0 & Register 1 & Register 2 \\
				\hline
				(start) & $\ket{x_0}$ & $\ket{x_1}$ & $\ket{x_2}$\\
				Add  reg. 0 to  reg. 1 & $\ket{x_0}$ & $\ket{x_0+x_1}$ & $\ket{x_2}$\\
				Add  reg. 2 to  reg. 1 & $\ket{x_0}$ & $\ket{x_0+x_1+x_2}$ & $\ket{x_2}$\\
				\textbf{Product on all} & $\mathbf{\ket{x_0}}$ & $\mathbf{\ket{x_0+x_1+x_2}}$ & $\mathbf{\ket{x_2}}$\\
				Add $-1 \times$ reg. 2 to  reg. 1 & $\ket{x_0}$ & $\ket{x_0+x_1}$ & $\ket{x_2}$\\
				Add $2 \times$ reg. 0 to $-1 \times$ reg. 1 & $\ket{x_0}$ & $\ket{x_0-x_1}$ & $\ket{x_2}$\\
				Add  reg. 1 to  reg. 2 & $\ket{x_0}$ & $\ket{x_0-x_1}$ & $\ket{x_0-x_1+x_2}$\\
				Add  reg. 2 to  reg. 1 & $\ket{x_0}$ & $\ket{2x_0-2x_1+x_2}$ & $\ket{x_0-x_1+x_2}$\\
				Add $2 \times$ reg. 0 to  reg. 1 & $\ket{x_0}$ & $\ket{4x_0-2x_1+x_2}$ & $\ket{x_0-x_1+x_2}$\\
				\textbf{Product on regs. 1 and 2} & $\ket{x_0}$ & $\mathbf{\ket{4x_0-2x_1+x_2}}$ & $\mathbf{\ket{x_0-x_1+x_2}}$\\
				Add  reg. 0 to $2 \times$ reg. 2 & $\ket{x_0}$ & $\ket{4x_0-2x_1+x_2}$ & $\ket{3x_0-2x_1+2x_2}$\\
				Add  reg. 2 to $-2 \times$ reg. 1 & $\ket{x_0}$ & $\ket{-5x_0+2x_1}$ & $\ket{3x_0-2x_1+2x_2}$\\
				Add $2 \times$ reg. 2 to  reg. 1 & $\ket{x_0}$ & $\ket{x_0-2x_1+4x_2}$ & $\ket{3x_0-2x_1+2x_2}$\\
				Add  reg. 1 to $-4 \times$ reg. 2 & $\ket{x_0}$ & $\ket{x_0-2x_1+4x_2}$ & $\ket{-11x_0+6x_1-4x_2}$\\
				Add $2 \times$ reg. 1 to  reg. 2 & $\ket{x_0}$ & $\ket{x_0-2x_1+4x_2}$ & $\ket{-9x_0+2x_1+4x_2}$\\
				Add $8 \times$ reg. 0 to  reg. 2 & $\ket{x_0}$ & $\ket{x_0-2x_1+4x_2}$ & $\ket{-x_0+2x_1+4x_2}$\\
				Add $2 \times$ reg. 0 to  reg. 2 & $\ket{x_0}$ & $\ket{x_0-2x_1+4x_2}$ & $\ket{x_0+2x_1+4x_2}$\\
				\textbf{Product on regs. 1 and 2} & $\ket{x_0}$ & $\mathbf{\ket{x_0-2x_1+4x_2}}$ & $\mathbf{\ket{x_0+2x_1+4x_2}}$\\
				Add  reg. 2 to $-1 \times$ reg. 1 & $\ket{x_0}$ & $\ket{4x_1}$ & $\ket{x_0+2x_1+4x_2}$\\
				Add $-1 \times$ reg. 0 to  reg. 2 & $\ket{x_0}$ & $\ket{4x_1}$ & $\ket{2x_1+4x_2}$\\
				Add $-1 \times$ reg. 1 to $2 \times$ reg. 2 & $\ket{x_0}$ & $\ket{4x_1}$ & $\ket{8x_2}$\\
				Divide reg. 1 by 4 & $\ket{x_0}$ & $\ket{x_1}$ & $\ket{8x_2}$\\
				Divide reg. 2 by 8 & $\ket{x_0}$ & $\ket{x_1}$ & $\ket{x_2}$\\
				\hline
			\end{tabular}
		\end{center}
		\caption{
			\label{tab:k3_qq_ops}
			\textbf{$\mathbf{k=3}$ parallel sequence for \textsf{PhaseTripleProduct}.}
			In this table only the state of the $x$ sub-registers are shown; the same operations are applied to the $y$ and $z$ registers.
			Registers containing values upon which the algorithm is applied again recursively are highlighted in bold.
			The linear combinations used here for the products correspond to the evaluation points $w_\ell \in \{0, \infty, \pm 1, \pm 2, -1/2 \}$.
		}
	\end{table*}
	
	\begin{table*}
		\begin{center}
				\begin{tabular}{|c|c|c|c|c|}
					\hline
					Operation & Register 0 & Register 1 & Register 2 & Register 3 \\
					\hline
					(start) & $\ket{x_0}$ & $\ket{x_1}$ & $\ket{x_2}$ & $\ket{x_3}$\\
					Add  reg. 0 to  reg. 2 & $\ket{x_0}$ & $\ket{x_1}$ & $\ket{x_0+x_2}$ & $\ket{x_3}$\\
					Add  reg. 3 to  reg. 1 & $\ket{x_0}$ & $\ket{x_1+x_3}$ & $\ket{x_0+x_2}$ & $\ket{x_3}$\\
					Add  reg. 2 to  reg. 1 & $\ket{x_0}$ & $\ket{x_0+x_1+x_2+x_3}$ & $\ket{x_0+x_2}$ & $\ket{x_3}$\\
					Add  reg. 1 to $-2 \times$ reg. 2 & $\ket{x_0}$ & $\ket{x_0+x_1+x_2+x_3}$ & $\ket{-x_0+x_1-x_2+x_3}$ & $\ket{x_3}$\\
					\textbf{Product on all regs.} & $\mathbf{\ket{x_0}}$ & $\mathbf{\ket{x_0+x_1+x_2+x_3}}$ & $\mathbf{\ket{-x_0+x_1-x_2+x_3}}$ & $\mathbf{\ket{x_3}}$\\
					Add  reg. 1 to  reg. 2 & $\ket{x_0}$ & $\ket{x_0+x_1+x_2+x_3}$ & $\ket{2x_1+2x_3}$ & $\ket{x_3}$\\
					Add $-3 \times$ reg. 3 to  reg. 2 & $\ket{x_0}$ & $\ket{x_0+x_1+x_2+x_3}$ & $\ket{2x_1-x_3}$ & $\ket{x_3}$\\
					Add $3 \times$ reg. 0 to  reg. 1 & $\ket{x_0}$ & $\ket{4x_0+x_1+x_2+x_3}$ & $\ket{2x_1-x_3}$ & $\ket{x_3}$\\
					Add  reg. 2 to $2 \times$ reg. 1 & $\ket{x_0}$ & $\ket{8x_0+4x_1+2x_2+x_3}$ & $\ket{2x_1-x_3}$ & $\ket{x_3}$\\
					Add $3 \times$ reg. 3 to $2 \times$ reg. 2 & $\ket{x_0}$ & $\ket{8x_0+4x_1+2x_2+x_3}$ & $\ket{4x_1+x_3}$ & $\ket{x_3}$\\
					Add $-1 \times$ reg. 1 to $2 \times$ reg. 2 & $\ket{x_0}$ & $\ket{8x_0+4x_1+2x_2+x_3}$ & $\ket{-8x_0+4x_1-2x_2+x_3}$ & $\ket{x_3}$\\
					\textbf{Product on regs. 1 and 2} & $\ket{x_0}$ & $\mathbf{\ket{8x_0+4x_1+2x_2+x_3}}$ & $\mathbf{\ket{-8x_0+4x_1-2x_2+x_3}}$ & $\ket{x_3}$\\
					Add  reg. 1 to $-1 \times$ reg. 2 & $\ket{x_0}$ & $\ket{8x_0+4x_1+2x_2+x_3}$ & $\ket{16x_0+4x_2}$ & $\ket{x_3}$\\
					Divide reg. 2 by 4 & $\ket{x_0}$ & $\ket{8x_0+4x_1+2x_2+x_3}$ & $\ket{4x_0+x_2}$ & $\ket{x_3}$\\
					Add $6 \times$ reg. 2 to  reg. 1 & $\ket{x_0}$ & $\ket{32x_0+4x_1+8x_2+x_3}$ & $\ket{4x_0+x_2}$ & $\ket{x_3}$\\
					Add $-15 \times$ reg. 0 to $4 \times$ reg. 2 & $\ket{x_0}$ & $\ket{32x_0+4x_1+8x_2+x_3}$ & $\ket{x_0+4x_2}$ & $\ket{x_3}$\\
					Add $15 \times$ reg. 3 to  reg. 1 & $\ket{x_0}$ & $\ket{32x_0+4x_1+8x_2+16x_3}$ & $\ket{x_0+4x_2}$ & $\ket{x_3}$\\
					Divide reg. 1 by 2 & $\ket{x_0}$ & $\ket{16x_0+2x_1+4x_2+8x_3}$ & $\ket{x_0+4x_2}$ & $\ket{x_3}$\\
					Add $-15 \times$ reg. 0 to  reg. 1 & $\ket{x_0}$ & $\ket{x_0+2x_1+4x_2+8x_3}$ & $\ket{x_0+4x_2}$ & $\ket{x_3}$\\
					Add  reg. 1 to $-2 \times$ reg. 2 & $\ket{x_0}$ & $\ket{x_0+2x_1+4x_2+8x_3}$ & $\ket{-x_0+2x_1-4x_2+8x_3}$ & $\ket{x_3}$\\
					\textbf{Product on regs. 1 and 2} & $\ket{x_0}$ & $\mathbf{\ket{x_0+2x_1+4x_2+8x_3}}$ & $\mathbf{\ket{-x_0+2x_1-4x_2+8x_3}}$ & $\ket{x_3}$\\
					Add $6 \times$ reg. 1 to $-2 \times$ reg. 2 & $\ket{x_0}$ & $\ket{x_0+2x_1+4x_2+8x_3}$ & $\ket{8x_0+8x_1+32x_2+32x_3}$ & $\ket{x_3}$\\
					Divide reg. 2 by 8 & $\ket{x_0}$ & $\ket{x_0+2x_1+4x_2+8x_3}$ & $\ket{x_0+x_1+4x_2+4x_3}$ & $\ket{x_3}$\\
					Add $-3 \times$ reg. 2 to $2 \times$ reg. 1 & $\ket{x_0}$ & $\ket{-x_0+x_1-4x_2+4x_3}$ & $\ket{x_0+x_1+4x_2+4x_3}$ & $\ket{x_3}$\\
					Add $3 \times$ reg. 0 to $4 \times$ reg. 1 & $\ket{x_0}$ & $\ket{-x_0+4x_1-16x_2+16x_3}$ & $\ket{x_0+x_1+4x_2+4x_3}$ & $\ket{x_3}$\\
					Add $12 \times$ reg. 3 to  reg. 2 & $\ket{x_0}$ & $\ket{-x_0+4x_1-16x_2+16x_3}$ & $\ket{x_0+x_1+4x_2+16x_3}$ & $\ket{x_3}$\\
					Add $-3 \times$ reg. 0 to $4 \times$ reg. 2 & $\ket{x_0}$ & $\ket{-x_0+4x_1-16x_2+16x_3}$ & $\ket{x_0+4x_1+16x_2+64x_3}$ & $\ket{x_3}$\\
					Add $48 \times$ reg. 3 to  reg. 1 & $\ket{x_0}$ & $\ket{-x_0+4x_1-16x_2+64x_3}$ & $\ket{x_0+4x_1+16x_2+64x_3}$ & $\ket{x_3}$\\
					\textbf{Product on regs. 1 and 2} & $\ket{x_0}$ & $\mathbf{\ket{-x_0+4x_1-16x_2+64x_3}}$ & $\mathbf{\ket{x_0+4x_1+16x_2+64x_3}}$ & $\ket{x_3}$\\
					Add  reg. 2 to  reg. 1 & $\ket{x_0}$ & $\ket{8x_1+128x_3}$ & $\ket{x_0+4x_1+16x_2+64x_3}$ & $\ket{x_3}$\\
					Divide reg. 1 by 8 & $\ket{x_0}$ & $\ket{x_1+16x_3}$ & $\ket{x_0+4x_1+16x_2+64x_3}$ & $\ket{x_3}$\\
					Add $-4 \times$ reg. 1 to  reg. 2 & $\ket{x_0}$ & $\ket{x_1+16x_3}$ & $\ket{x_0+16x_2}$ & $\ket{x_3}$\\
					Add $-1 \times$ reg. 0 to  reg. 2 & $\ket{x_0}$ & $\ket{x_1+16x_3}$ & $\ket{16x_2}$ & $\ket{x_3}$\\
					Divide reg. 2 by 16 & $\ket{x_0}$ & $\ket{x_1+16x_3}$ & $\ket{x_2}$ & $\ket{x_3}$\\
					Add $-16 \times$ reg. 3 to  reg. 1 & $\ket{x_0}$ & $\ket{x_1}$ & $\ket{x_2}$ & $\ket{x_3}$\\
					\hline
				\end{tabular}
		\end{center}
		\caption{
			\small
			\label{tab:k4_qq_ops}
			\textbf{$\mathbf{k=4}$ parallel sequence for \textsf{PhaseTripleProduct}.}
			In this table only the state of the $x$ sub-registers are shown; the same operations are applied to the $y$ and $z$ registers.
			Registers containing values upon which the algorithm is applied again recursively are highlighted in bold.
			The linear combinations used here for the products correspond to the evaluation points $w_\ell \in \{0, \infty, \pm 1, \pm 1/2, \pm 2, \pm 4 \}$.
			We note that it may be possible to parallelize the products further into three layers, rather than four, but the number of required additions would likely increase.
		}
	\end{table*}

\end{document}